\newcommand{\Osymbol}{{\mathcal O}}
\newcommand{\BO}[1]{\Osymbol\left(#1\right)}
\newcommand{\TO}[1]{\tilde{\Osymbol}\left(#1\right)}
\newcommand{\BT}[1]{{\Theta}\left(#1\right)}
\newcommand{\BOM}[1]{\Omega\left(#1\right)}
\newcommand{\sort}[1]{\text{sort}\left({#1}\right)}
\newcommand{\E}[1]{\mathbb{E}\left[#1\right]}
\newcommand{\V}[1]{\text{Var}\left(#1\right)}
\renewcommand{\Pr}[1]{\text{Pr}\left[#1\right]}
\newcommand{\ms}[1]{{#1}}
\newcommand{\SimJoin}{\textsc{OSimJoin}}
\newcommand{\ASimJoin}{\textsc{ASimJoin}}
\begin{document}

\title{I/O-Efficient Similarity Join
\thanks{The research leading to these results has received
funding from the European Research Council under the EU 7th Framework Programme, ERC grant agreement no.~614331.}}
\date{}

\author{Rasmus Pagh \and Ninh Pham \and Francesco Silvestri\thanks{In part supported by University of Padova project CPDA121378 and by MIUR of Italy project AMANDA while working at the University of Padova.} \and Morten St{\"o}ckel\thanks{Supported by the Danish National Research Foundation / Sapere Aude program.}}

\institute{IT University of Copenhagen, Denmark}

\maketitle


\begin{abstract}

We present an I/O-efficient algorithm for computing similarity joins based on locality-sensitive hashing (LSH).
In contrast to the filtering methods commonly suggested our method has provable sub-quadratic dependency on the data size.
Further, in contrast to straightforward implementations of known LSH-based algorithms on external memory, our approach is able to take significant advantage of the available internal memory: Whereas the time complexity of classical algorithms includes a factor of $N^\rho$, where $\rho$ is a parameter of the LSH used, the I/O complexity of our algorithm merely includes a factor $(N/M)^\rho$, where $N$ is the data size and $M$ is the size of internal memory. 
Our algorithm is randomized and outputs the correct result with high probability.
It is a simple, recursive, cache-oblivious procedure, and we believe that it will be useful also in other computational settings such as parallel computation.

\end{abstract}

\keywords{Similarity join; locality sensitive hashing; cache aware; cache oblivious;}


\section{Introduction}

The ability to handle noisy or imprecise data is becoming increasingly important in computing.
In database settings this kind of capability is often achieved using similarity join primitives that replace equality predicates with a condition on similarity.
To make this more precise consider a space $\mathbb{U}$ and a distance function $d: \mathbb{U}\times \mathbb{U} \rightarrow {\bf R}$.
The {\em similarity join\/} of sets $R, S \subseteq \mathbb{U}$ is the following:
Given a radius $r$, compute the set 
$R \bowtie_{\leq r} S = \{ (x, y)\in R\times S \; | \; d(x, y	)\leq r \}.$
This problem occurs in numerous applications, such as web deduplication~\cite{Bayardo_WWW07,Henzinger_SIGIR06,Xiao_WWW08}, document clustering~\cite{Broder_NETWORK97}, data cleaning~\cite{Arasu_VLDB06,Chaudhuri_ICDE06}. 
As such applications arise in large-scale datasets, the problem of scaling up similarity join for different metric distances is getting more important and more challenging.

Many known similarity join techniques (e.g.,~prefix filtering~\cite{Arasu_VLDB06,Chaudhuri_ICDE06}, positional filtering~\cite{Xiao_WWW08}, inverted index-based filtering~\cite{Bayardo_WWW07}) are based on \emph{filtering} techniques that often, but not always, succeed in reducing computational costs.
If we let $N = |R|+|S|$ these techniques generally require $\Omega(N^2)$ comparisons for worst-case data.
Another  approach is \emph{locality-sensitive hashing} (LSH) where candidate output pairs are generated using collisions of carefully chosen hash functions. The LSH is defined as follows.
\begin{definition}\label{def:LSH}
Fix a distance function $d: \mathbb{U}\times \mathbb{U} \rightarrow {\bf R}$.
For positive reals $r,c,p_1,p_2$, and $p_1 > p_2, c > 1$, a family of functions $\mathcal{H}$ is \emph{$(r,cr,p_1,p_2)$-sensitive} if for uniformly chosen $h\in \mathcal{H}$ and all $x, y\in \mathbb{U}$:
\begin{itemize}
	\item If $d(x,y) \leq r$ then $\Pr{h(x)=h(y)}\geq p_1$;
	\item If $d(x,y) \geq cr$ then $\Pr{h(x)=h(y)}\leq p_2$.
\end{itemize}
We say that $\mathcal{H}$ is \textit{monotonic} if $\Pr{h(x)=h(y)}$ is a non-increasing function of the distance function $d(x,y)$.
We also say that $\mathcal{H}$ uses space $s$ if a function $h\in \mathcal{H}$ can be stored and evaluated using space~$s$.
\end{definition}

LSH is able to break the $N^2$ barrier in cases where for some constant $c > 1$ the number of pairs in $R \bowtie_{\leq cr} S$ is not too large.
In other words, there should not be too many pairs that have distance within a factor $c$ of the threshold, the reason being that such pairs are likely to become candidates, yet considering them does not contribute to the output.
For notational simplicity, we will talk about \emph{far} pairs at distance
greater than $cr$ (those that should not be reported), \emph{near} pairs at
distance at most $r$ (those that should be reported), and \emph{$c$-near} pairs
at distance between $r$ and $cr$ (those that should not be reported but the
LSH provides no collision guarantees).

\smallskip

\textbf{Our contribution.}
In this paper we study I/O-efficient similarity join methods based on LSH.
That is, we are interested in minimizing the number of I/O operations where a block of $B$ points from $\mathbb{U}$ is transferred between an external memory and an internal memory with capacity for $M$ points from $\mathbb{U}$.
Our main result is the first \emph{cache-oblivious} algorithm for similarity join that has \textit{provably} sub-quadratic dependency on the data size $N$ and at the same time inverse polynomial dependency on $M$.
In essence, where previous methods have an overhead factor of either $N/M$ or $(N/B)^\rho$ we obtain an overhead of $(N/M)^\rho$, where $0<\rho<1$ is a parameter of the LSH employed, strictly improving both. 
We show:
\begin{theorem}\label{thm:main}
Consider $R,S \subseteq \mathbb{U}$, let $N = |R|+|S|$, assume $18 \log{N}+3B \leq M < N$ and that there exists a monotonic $(r,cr,p_1,p_2)$-sensitive family of functions with respect to distance measure $d$, using space $B$ and with $p_2 < p_1 < 1/2$. 
Let $\rho=\log{p_1}/\log{p_2}$.
Then there exists a cache-oblivious randomized algorithm computing $R \; \bowtie_{\leq r} S$ (w.r.t.~$d$) with probability $1-\BO{{1}/{N}}$ using
$$\TO{\left(\frac{N}{M}\right)^{\rho}
 \left(\frac{N}{B}+\frac{|R \underset{\leq r}{\bowtie} S|}{M
B}\right)
+ \frac{|R \underset{\leq cr}{\bowtie} S|}{M B} } \text{ I/Os.}\footnote{The $\TO{\cdot}$-notation hides polylog$(N)$ factors.}$$
\end{theorem}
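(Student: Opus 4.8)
\medskip
\noindent\textbf{Proof proposal.}
The plan is to analyze the recursive, parameter-free procedure \SimJoin, which on a subinstance $(R',S')$ draws a small batch of independent functions $h\in\mathcal H$, distributes the points of $R'$ and $S'$ into buckets according to their $h$-values, and recurses on each resulting pair of buckets; the recursion runs to a predetermined depth $\Theta(\log_{1/p_2}N)$ that depends only on the known quantities $N$ and $p_2$, and at the bottom it reports near pairs of each small bucket by direct comparison, removing duplicate output by a final sort. The two governing quantities are the per-level fan-out $\Theta(1/p_1)$ and the \emph{analysis} depth $d=\log_{1/p_2}(N/M)$, which the algorithm never uses but which marks the level at which a bucket's expected number of colliding far points has fallen to $\Theta(M)$; here I will repeatedly invoke the identities $p_1^{d}=(M/N)^{\rho}$ and $p_2^{d}=M/N$ that follow from $\rho=\log p_1/\log p_2$. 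Crucially, the algorithm refers to neither $M$ nor $B$, so the dependence on them enters only through the standard cache-oblivious observation that any subtree whose data fits in internal memory executes with a single $\BO{\text{size}/B}$ load and no further I/O. The assumption $18\log N+3B\le M$ is what lets me keep a hash function (space $B$), an output buffer, and $\BO{\log N}$ bits of bookkeeping resident simultaneously while absorbing lower-order additive terms.

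First I would prove correctness. By $(r,cr,p_1,p_2)$-sensitivity a near pair is preserved by a single $h$ with probability at least $p_1$, so with fan-out $\Theta(1/p_1)$ its expected number of surviving copies one level down is $\Omega(1)$; oversampling the batch by a $\Theta(\log N)$ factor and a Chernoff bound then guarantee that the pair survives in at least one branch down to the cache frontier with probability $1-N^{-\Omega(1)}$, after which the in-memory computation of that cache-resident bucket compares and reports it. A union bound over the at most $N^{2}$ near pairs yields the claimed $1-\BO{1/N}$ success. The same calculation for far pairs, where the per-level survival expectation is $(1/p_1)p_2=p_2/p_1<1$, shows that surviving far collisions decay geometrically with depth; this is the fact that justifies truncating the analysis at depth $d$ and asserting that, with high probability, the far content of a frontier bucket is $\BO{M}$.

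Next I would charge the I/Os level by level on the recursion tree. The $(1/p_1)^{j}$ branch-systems present at depth $j$ each partition the original $N$ points, so the data touched at depth $j$ totals $(1/p_1)^{j}N$ and is regrouped by a cache-oblivious distribution sort at cost $\TO{\text{size}/B}$; the geometric sum to depth $d$ is dominated by its last term and gives $\TO{(1/p_1)^{d}N/B}=\TO{(N/M)^{\rho}N/B}$, the first term. Equivalently, at the frontier there are $(N/M)^{\rho}$ systems, each split into $\approx N/M$ memory-sized buckets, and loading all $(N/M)^{\rho}\cdot(N/M)$ of them once costs the same $\TO{(N/M)^{\rho}N/B}$, after which every within-bucket comparison is free of further I/O. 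The output-dependent terms arise precisely from buckets that fail to shrink below cache size because they are dense in pairs at distance $\le cr$: such a bucket of size $b$ is finished by a block-nested-loop join using $\BO{b^{2}/(MB)}$ I/Os, which I charge to its $\Theta(b^{2})$ colliding within-$cr$ pairs at the amortized rate $1/(MB)$, producing the terms $(N/M)^{\rho}|R\bowtie_{\le r}S|/(MB)$ and $|R\bowtie_{\le cr}S|/(MB)$.

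The step I expect to be the real obstacle is making this last charging tight with the correct dependence on $(N/M)^{\rho}$: explaining why the near pairs, which by design must survive into $\Theta((N/M)^{\rho})$ branches in order to be reported, legitimately carry that factor, whereas the contribution of the merely $c$-near pairs — for which the family gives no collision upper bound, so their proliferation can only be controlled crudely — must still collapse to $|R\bowtie_{\le cr}S|/(MB)$ \emph{without} the factor. Delivering this cleanly demands a high-probability rather than in-expectation argument: I would fix the randomness one level at a time, apply Chernoff bounds to the number of surviving copies of each pair and to each bucket size, and union over all nodes of the tree, using $M\ge 18\log N+3B$ to give the concentration enough slack to beat the $\BO{1/N}$ failure budget. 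The remaining ingredients — correctness of the cache-oblivious distribution sort, duplicate removal by sorting the output, and the in-memory handling of random bits — are routine and contribute only the hidden polylogarithmic factors.
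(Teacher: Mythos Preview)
Your recursion shape matches \SimJoin, but the analysis plan has two real gaps. The first is the reliance on Chernoff bounds and per-level oversampling. For bucket sizes Chernoff is unavailable: once a single $h$ is drawn, the indicators $[h(y)=h(x)]$ over different $y$ are correlated through $h$, and the paper names exactly this hurdle in the introduction. More damagingly, ``oversampling the batch by a $\Theta(\log N)$ factor'' destroys the very $c$-near term you flag as the obstacle. A $c$-near pair may have collision probability arbitrarily close to $p_1$ (monotonicity gives only $\le p_1$), so with fan-out $\Theta((\log N)/p_1)$ its expected number of children is $\Theta(\log N)$, and after $d=\log_{1/p_2}(N/M)$ levels it has multiplied by $(\log N)^d=(N/M)^{\Theta(\log\log N/\log(1/p_2))}$, a polynomial blowup the $\TO{\cdot}$ cannot hide; the same oversampling makes your far-pair ``per-level expectation $(1/p_1)p_2<1$'' claim false, since the actual expectation becomes $(\log N)\,p_2/p_1$. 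The paper instead keeps fan-out exactly $L=1/p_1$, so $c$-near pairs are in (sub)critical Galton--Watson branching with expected population $\le 1$ at every level (Lemma~\ref{lemma:expectation}); it bounds the I/O only \emph{in expectation} via these moments and converts to high probability by running $\BO{\log N}$ independent instances in parallel and halting at the first finisher (Markov), not by per-level concentration. Correctness at fan-out $1/p_1$ is likewise subtler than Chernoff: a near pair with collision probability exactly $p_1$ is critical branching, and the paper extracts survival probability $\Omega(1/\sqrt{\log N})$ from a variance calculation (Lemma~\ref{lem:prob}), compensating with $\BO{\log^{3/2}N}$ outer repetitions rather than a per-level $\log N$ boost.

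The second gap is that your algorithm omits the sampling step (Steps~\ref{step:sample}--\ref{step:nested-loop}): before recursing, \SimJoin\ extracts into $\tilde R'$ those points of the smaller set that are $c$-near to a constant fraction of the larger set, handles $\tilde R'\bowtie_{\le r}\tilde S$ immediately by the cache-oblivious nested loop of Theorem~\ref{thm:co-nested-loop}, and recurses only on $\tilde R\setminus\tilde R'$. Without this, your amortization ``charge the $\BO{b^2/(MB)}$ nested-loop cost of a large bucket to its $\Theta(b^2)$ within-$cr$ pairs'' is invalid: a bucket can remain large not because it contains $\Theta(b^2)$ $c$-near pairs but merely because a single point of $\tilde R$ is near to most of $\tilde S$, and such a point's bucket never shrinks to memory size however deep you recurse. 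Lemma~\ref{lemma:sample} is precisely what licenses the two charges in Lemma~\ref{lem:collision-bound}: the nested-loop work on $\tilde R'$ is charged to $c$-near collisions via Equation~\ref{eq:manyclose}, and after the removal every remaining point is far from a constant fraction of its partner set (Equation~\ref{eq:manyfar}), so the residual cost below level $\ell$ can be charged to \emph{far} collisions, whose expected count does decay as $(p_2/p_1)^i$.
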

%
We  conjecture that the bound in Theorem~\ref{thm:main} 
is close to the best possible for the class of ``signature based''
algorithms that work by generating a set of LSH values 
(from a black-box and monotonic family) and checking all pairs that collide. 
Our conjecture is based on an informal argument, given in full in Section~\ref{sec:exp}. 
We describe a worst-case input, where it seems significant advances are required to beat 
Theorem~\ref{thm:main} asymptotically. Further, we observe that for $M=N$ our bound coincides with
the optimal bound of reading the input, and when $M=1$ our bound coincides with the bounds of the best
known internal memory algorithms.

\smallskip

It is worth noting that whereas most methods in the literature focus on a single (or a few) distance measure, our method works for an arbitrary space and distance measure that allows LSH, e.g.,~Hamming, Manhattan~($\ell_1$), Euclidean~($\ell_2$), Jaccard, and angular metric distances. 
Since our approach makes use of LSH as a black box, the problem of reporting the complete join result with certainty would require major advances in LSH methods (see~\cite{Pacuk_COCOON16,Pagh_SODA16} for recent progress in this direction).

A primary technical hurdle in the paper is that we cannot use any kind of strong concentration bounds on the number of points having a particular value, since hash values of an LSH family may be correlated \emph{by definition}. 
Another hurdle is \emph{duplicate elimination} in the output stemming from pairs having multiple LSH collisions. 
However, in the context of I/O-efficient algorithms it is natural to not require the {\em listing\/} of all near pairs, but rather we simply require that the algorithm {\em enumerates\/} all such near pairs.
More precisely, the algorithm calls for each near pair $(x,y)$ a function \texttt{emit}$(x,y)$. 
This is a natural assumption in external memory since it reduces the I/O complexity. 
In addition, it is desired in many applications where join results are intermediate results pipelined to a subsequent computation, and are not required to be stored on external memory.
Our upper bound can be easily adapted to list all instances by increasing the I/O complexity of an \textit{unavoidable} additive term of $\BT{|R \; \bowtie_{\leq r} S|/B}$ I/Os.

\smallskip

\textbf{Organization.}
The organization of the paper is as follows. 
In Section~\ref{sec:RelatedWork}, we briefly review related work. 
Section~\ref{sec:OurAlgorithms} describes our algorithms including a warm-up cache-aware approach and the main results, a cache-oblivious solution, its analysis, and a randomized approach to remove duplicates. 
Section~\ref{sec:exp} provides some discussions on our
algorithms with some real datasets.
Section~\ref{sec:concl} concludes the paper.

\section{Related Work}\label{sec:RelatedWork}

In this section, we briefly review LSH, the computational I/O model, and some
state-of-the-art similarity join techniques.

\smallskip

\textbf{Locality-sensitive hashing (LSH).}
LSH was originally introduced by Indyk and Motwani~\cite{Indyk_STOC98} for
similarity search problems in high dimensional data. 
This technique obtains a sublinear (i.e.,~$\BO{N^{\rho}}$) time complexity by increasing the gap of collision probability between near points and far points using the LSH family as defined in Definition~\ref{def:LSH}. 
The gap of collision probability is polynomial, with an exponent of $\rho = \log{p_1}/\log{p_2}$ dependent on $c$.  

It is worth noting that the standard LSHs for metric distances, including Hamming~\cite{Indyk_STOC98}, $\ell_1$~\cite{Datar_SOCG04}, $\ell_2$~\cite{Andoni_FOCS06,Datar_SOCG04}, Jaccard~\cite{Broder_NETWORK97} and angular distances~\cite{Charikar_STOC02} are \textit{monotonic}. 
These common LSHs are space-efficient, and use space comparable to that required to store a point, except the LSH of~\cite{Andoni_FOCS06} which requires space $N^{o(1)}$.
We do not explicitly require the hash values themselves to be particularly small.
However, using universal hashing we can always map to small bit strings while introducing no new collisions with high probability. 
Thus we assume that $B$ hash values fit in one memory block.




\smallskip

\textbf{Computational I/O model.}
We study algorithms for similarity join in the \emph{external memory model}, which has been widely adopted in the literature (see, e.g., the survey by
Vitter~\cite{Vitter08}).
The external memory model consists of an internal memory of $M$ words and an external memory of unbounded size. 
The processor can only access data stored in the internal memory and move data between the two memories in blocks of size $B$.
For simplicity we will here measure block and internal memory size in units of points from $\mathbb{U}$, such that they can contain $B$ points and $M$ points, respectively.

The \emph{I/O complexity} of any algorithm is defined as the number of input/output blocks moved between the two memories by the algorithm. 
The \emph{cache-aware} approach makes explicit use of the parameters $M$ and $B$  to achieve its I/O complexity, whereas the \textit{cache-oblivious} one~\cite{frigo1999cache} does not explicitly use any model parameters. 
The latter approach is desirable as it implies optimality on all levels of the memory hierarchy and does not require parameter tuning when executed on different physical machines. 
Note that the cache-oblivious model assumes that the internal memory is \emph{ideal} in the sense that it has an optimal cache-replacement policy. Such cache-replacement policy can evict the block that is used furthest in the future, and can place a block anywhere in the cache (full associativity). 

\smallskip

\textbf{Similarity join techniques.}
We review some state-of-the-art of similarity join techniques most closely related to our work.

\begin{itemize}
	\item \textbf{Index-based similarity join.}
A popular approach is to make use of indexing techniques to build a data structure for one relation, and then perform queries using the points of the other relation.
The indexes typically perform some kind of \emph{filtering} to reduce the number of points that a given query point is compared to (see, e.g.,~\cite{Bayardo_WWW07,Chaudhuri_ICDE06,Gionis_VLDB99}).
Indexing can be space consuming, in particular for LSH, but in the context of similarity join this is not a big concern since we have many queries, and thus can afford to construct each hash table ``on the fly''.
On the other hand, it is clear that index-based similarity join techniques will not be able to take significant advantage of internal memory when $N\gg M$.
Indeed, the query complexity stated in~\cite{Gionis_VLDB99} is $\BO{(N/B)^\rho}$ I/Os.
Thus the I/O complexity of using indexing for similarity join will be high.
	\item \textbf{Sorting-based.}\label{sec:sorting}
The indexing technique of~\cite{Gionis_VLDB99} can be adapted to compute similarity joins more efficiently by using the fact that many points are being looked up in the hash tables.
This means that all lookups can be done in a batched fashion using sorting.
This results in a dependency on $N$ that is $\TO{(N/B)^{1 + \rho}}$ I/Os, where $\rho \in (0;1)$ is a parameter of the LSH family.
	\item \textbf{Generic joins.}\label{sec:generic}
When $N$ is close to $M$ the I/O-complexity can be improved by using general join operators optimized for this case.
It is easy to see that when $N/M$ is an integer, a nested loop join requires $N^2/(MB)$ I/Os.
Our cache-oblivious algorithm will make use of the following result on cache-oblivious nested loop joins:

\begin{theorem}(He and Luo~\cite{he2006cache})\label{thm:co-nested-loop}
	Given a similarity join condition, the join of relations $R$ and $S$ can be computed by a cache-oblivious algorithm in
	$$\BO{\frac{|R|+|S|}{B} + \frac{|R||S|}{MB}} \text{I/Os}.$$ 
		This number of I/Os suffices to generate the result in memory, but may not suffice to write it to disk.
\end{theorem}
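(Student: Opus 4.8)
The plan is to reconstruct this as a recursive, cache-oblivious divide-and-conquer procedure that never inspects $M$ or $B$, invoking the ideal-cache assumption only in the accounting. First I would set up the algorithm: to join $R$ with $S$, if both relations have constant size simply test the candidate pair(s) against the join condition and \texttt{emit} those that satisfy it; otherwise split the \emph{larger} of the two relations into two equal halves and recurse on the two resulting subproblems, each sharing the smaller, unsplit relation. The relations are stored as contiguous arrays, and the recursion only passes down index ranges, so no data is physically moved when a subproblem is split — the only I/O occurs when leaves actually read points in order to test them. Correctness is then immediate: the partition of index ranges routes every pair $(x,y)\in R\times S$ into exactly one leaf where it is tested, so every satisfying pair is eventually emitted.

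The heart of the argument is the I/O analysis, and the main obstacle is accounting correctly for the repeated reading of the unsplit relation: splitting $R$ in two and recursing makes both subcalls scan (a range of) $S$, so a naive ``read $S$ once'' bound fails, and one must show that these re-reads produce exactly the $|R||S|/(MB)$ term rather than something larger. I would handle this by fixing a suitable constant $\alpha<1$ and identifying the \emph{critical level} of the recursion tree: the topmost nodes whose combined subproblem size $|R'|+|S'|$ first drops to at most $\alpha M$. Under the ideal-cache model (optimal eviction, full associativity) such a subproblem, together with the $\BO{\log N}$ recursion stack (which fits since the paper assumes $M=\Omega(\log N)$), resides entirely in internal memory, so once its data is brought in with $\BO{(|R'|+|S'|)/B}$ I/Os all of its descendant pair-tests are served from cache at no further cost. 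Because the recursion always halves the larger side, the two relations of each critical subproblem are balanced to within a constant factor whenever both are large, so there each critical subproblem has $|R'|+|S'|=\BT{M}$.

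It then remains to sum $\BO{(|R'|+|S'|)/B}=\BO{M/B}$ over all critical subproblems. I would bound their number by viewing the computation as a tiling of the $|R|\times|S|$ grid of candidate pairs by disjoint rectangles, one per critical subproblem; since the index ranges partition the grid, these rectangles tile it exactly. When both relations are large each tile has area $|R'|\cdot|S'|=\BT{M^2}$, giving $\BO{|R||S|/M^2}$ tiles and hence $\BO{|R||S|/(MB)}$ I/Os. When one relation is smaller than $\BT{M}$ it is never split, the tiles degenerate to strips, and the same count then yields the linear term $\BO{(|R|+|S|)/B}$ alongside the product term; checking these boundary cases is routine and I would not grind through it. Adding the two contributions gives the claimed $\BO{(|R|+|S|)/B + |R||S|/(MB)}$ bound.

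Finally I would remark on the caveat in the statement: this accounting charges only for bringing every candidate pair into memory and emitting the satisfying ones, and carries no output term. If the result were instead required to be written to disk, an \emph{unavoidable} additive $\BT{|R\bowtie S|/B}$ would appear, which is precisely why the theorem says the stated number of I/Os suffices to generate the result in memory but may not suffice to write it out.
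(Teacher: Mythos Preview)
The paper does not prove this statement at all: Theorem~\ref{thm:co-nested-loop} is quoted as a black-box result of He and Luo~\cite{he2006cache}, with no accompanying argument. So there is nothing in the paper to compare your proof against.

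That said, your reconstruction is essentially the standard cache-oblivious nested-loop analysis and is correct. The algorithm (always halve the larger relation, recurse on both halves against the unsplit smaller one, pass index ranges only) is exactly the right one, and your critical-level accounting is the canonical way to analyze it: identify the topmost subproblems with $|R'|+|S'|\le \alpha M$, observe that below that level the ideal cache serves everything for free once the $\BO{(|R'|+|S'|)/B}$ load has been paid, and then count critical subproblems by tiling the $|R|\times|S|$ grid. The balance claim (always halving the larger side forces $|R'|$ and $|S'|$ to be within a constant factor once both exceed $\Theta(M)$) is the key structural fact, and you state it correctly. The degenerate case where one relation is already smaller than $\Theta(M)$, so the tiles become strips and contribute only the linear $\BO{(|R|+|S|)/B}$ term, is handled as you say. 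Your closing remark about the missing output term matches the paper's own caveat and its later comment that listing (rather than enumerating) the join would add an unavoidable $\BT{|R\bowtie_{\le r}S|/B}$.

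Two minor points worth tightening if you write this out in full: (i) the per-subproblem load cost is really $\BO{(|R'|+|S'|)/B + 1}$ to account for block misalignment, but since $|R'|+|S'|=\Theta(M)\ge B$ this is absorbed; (ii) the recursion-stack argument should note that each frame is $\BO{1}$ words, not $\BO{1}$ points, so the paper's assumption $M\ge 18\log N + 3B$ (in point units) is more than enough.
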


We note that a similarity join can be part of a multi-way join involving more than two relations. For the class of \emph{acyclic joins}, where the variables compared in join conditions can be organized in a tree structure, one can initially apply a \emph{full reducer}~\cite{Yannakakis_VLDB81} that removes tuples that will not be part of the output. This efficiently reduces any acyclic join to a sequence of binary joins. Handling cyclic joins is much harder (see e.g.~\cite{Ngo_SIGMOD13}) and outside the scope of this paper.

\end{itemize}

\section{Our Algorithms}\label{sec:OurAlgorithms}

In this section we describe our I/O efficient algorithms.
We start in Section~\ref{sec:simple} with a warm-up cache-aware algorithm. It  uses an LSH family where the value of the collision probability is set to be a function of the internal memory size.
Section~\ref{sec:cacheobl} presents our main result, a recursive and cache-oblivious algorithm, which uses the LSH with a black-box approach and does not make any assumption on the value of collision probability. 
Section~\ref{sec:analysisIO} describes the analysis and Section~\ref{sec:remove-duplicates} shows how to reduce the expected number of times of emitting near pairs. 

\subsection{Cache-aware algorithm: \ASimJoin\ }\label{sec:simple}

We will now describe a simple cache-aware algorithm called \ASimJoin, which achieves the {worst case} I/O bounds as stated in Theorem~\ref{thm:main}. 
\ASimJoin\ relies on an $(r,cr,p'_1,p'_2)$-sensitive family $\mathcal H'$ of hash functions with the following properties: $p'_2 \leq M/N$ and $p'_1 \geq (M/N)^\rho$, for a suitable value $0 < \rho < 1$. 
Given an arbitrary monotonic $(r,cr,p_1,p_2)$-sensitive family $\mathcal H$, the family $\mathcal H'$ can be built by concatenating $\lceil \log_{p_2}(M/N)\rceil $ hash functions from $\mathcal H$. 
For simplicity, we assume that $\log_{p_2}(M/N)$ is an integer and thus the probabilities $p'_1$ and $p'_2$ can be exactly obtained. 
Nevertheless, the algorithm and its analysis can be extended to the general case by increasing the I/O complexity by a factor at most $p_1^{-1}$ in the worst case; in practical scenarios, this factor is a small constant~\cite{Broder_NETWORK97,Datar_SOCG04,Gionis_VLDB99}.

\SetCommentSty{}
\begin{algorithm}[!t]
\label{alg:ASimJoin}
\SetAlgoRefName{\ASimJoin$(R,S)$}
\SetKwSty{text}
\caption{$R, S$ are the input sets.}\label{algo:awaresimjoin}
\small

\SetKwBlock{K}{\textbf{Repeat} $3 \log{(N)}$ times}{}
\K{\label{step:logNrepeat}
	Associate to each point in $R$ and $S$ a counter initially set to $0$\;
	
	\SetKwBlock{R}{\textbf{Repeat} $L=2/p'_1$ times}{}
	\R{\label{step:Arepeat}
	
		Choose $h'_i\in\mathcal{H'}$ uniformly at random\label{step:A1}\;
		Use $h'_i$ to partition (in-place) $R$ and $S$ in buckets $R_v$, $S_v$ of points with the hash value~$v$\label{step:A2}\;

		\SetKwBlock{RR}{\textbf{For} each hash value $v$ generated in the previous step \label{step:A3}}{}
		\RR{
				\tcc{For simplicity we assume that $|R_v|\leq |S_v|$}
				Split $R_v$ and $S_v$ into chunks $R_{i,v}$ and $S_{i,v}$ of size at most $M/2$\;
				\SetKwBlock{F}{\textbf{For} every chunk $R_{i,v}$ of $R_v$}{}
				\F{
					Load in memory $R_{i,v}$\;
					\SetKwBlock{FF}{\textbf{For} every chunk $S_{i,v}$ of $S_v$ do}{}
					\FF{
						Load in memory $S_{i,v}$\;
						Compute $R_{i,v}\times  S_{i,v}$ and emit all near pairs. For each far pair, increment the associated counters by 1\;
						Remove from $S_{i,v}$ and
$R_{i,v}$ all points with the associated counter larger than $8LM$, and write
$S_{i,v}$ back to 	   	
						external
memory\label{step:remove}\;
						}		
					Write $R_{i,v}$ back to external memory\;
					}				
				}
			}
	}
\end{algorithm}

\ASimJoin\ assumes that each point in $R$ and $S$ is associated with a counter initially set to 0. 
This counter can be thought as an additional dimension of the point which hash functions and comparisons do not take into account.
The algorithm repeats $L=2/p'_1$ times the following procedure.
A hash function is randomly drawn from the $(r,cr,p'_1,p'_2)$-sensitive family, and it is used for partitioning the sets $R$ and $S$ into buckets of points with the same hash value.
We let $R_v$ and $S_v$ denote the buckets respectively containing points of $R$ and $S$ with the same hash value $v$.
Then, the algorithm iterates through every hash value and, for each hash value $v$, it uses a double nested loop for generating all pairs of points in $R_v\times S_v$.
The double nested loop loads consecutive chunks of $R_v$ and $S_v$ of size at most $M/2$: the outer loop runs on the smaller set (say $R_v$), while the inner one runs on the larger one (say $S_v$).
For each pair $(x,y)$, the algorithm emits the pair if $d(x,y)\leq r$,  increases by~1 counters associated with $x$ and $y$ if $d(x,y)>cr$,
or ignores the pair if  $r<d(x,y)\leq cr$.
Every time the counter of a point exceeds $8LM$, the point is considered to be
far away from all points and will be removed from the bucket. 
Chunks will be moved back in memory when they are no more needed. 
The entire \ASimJoin\ algorithm is repeated $3\log N$ times to find all
near pairs with high
probability. 
The following theorem shows the I/O bounds of the cache-aware approach.


\begin{theorem}\label{thm:aware}
Consider $R, S \subseteq \mathbb{U}$ and let $N = |R|+|S|$ be sufficiently large. 
Assume there exists a monotonic $(r,cr,p'_1,p'_2)$-sensitive family of functions with respect to distance measure $d$ with $p'_1=(M/N)^\rho$ and $p'_2 = M/N$, for a suitable value $0< \rho < 1$.
With probability $1-1/N$, the \ASimJoin\ algorithm enumerates all near pairs 
using
$$
\TO{ \left(\frac{N}{M}\right)^\rho \left(
\frac{N}{B} +
\frac{|R \underset{\leq r}{\bowtie} S|}{M
B}\right)
+ \frac{|R \underset{\leq cr}{\bowtie} S|}{M B} } \enspace \text{I/Os}.
$$
\end{theorem}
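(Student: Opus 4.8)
\noindent\emph{Proof idea.}\ The plan is to prove the two halves of the statement separately: a \emph{correctness} part showing that every near pair is emitted at least once with probability $1-\BO{1/N}$, and an \emph{I/O} part that charges the work by classifying the colliding pairs as near, $c$-near, or far. Throughout write $L=2/p_1'=2(N/M)^{\rho}$ for the number of inner repetitions. The I/O accounting rests on the fact that, after drawing $h_i'$, the double nested loop on a bucket $v$ examines every pair of $R_v\times S_v$ at cost $\BO{|R_v|/B+|S_v|/B+|R_v||S_v|/(MB)}$, so one inner iteration costs $\BO{N/B+(\textstyle\sum_v|R_v||S_v|)/(MB)}$ on top of the $\TO{N/B}$ (in-place, sorting-based) partition step. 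Because the buckets lie contiguously after partitioning, the additive $N/B$ scan term does not blow up to the number of buckets; this, and the requirement that a chunk from each of $R$ and $S$ fit together in internal memory, are the only places the memory size is used.

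For correctness, fix a near pair $(x,y)$ and one outer iteration, and couple the execution with a \emph{shadow} process in which no point is ever removed. Because removals only shrink buckets, the real number of far comparisons of $x$ never exceeds the shadow count $C_x=\sum_{i\le L}\sum_{y'\text{ far}}\mathbf{1}[h_i'(x)=h_i'(y')]$, whose expectation is $\E{C_x}\le L\cdot N\cdot p_2'=LM$; by Markov's inequality $\Pr{C_x>8LM}\le 1/8$, and similarly for $y$. On the complementary events $x$ and $y$ are never evicted, so if in addition they collide in at least one of the $L$ draws, the pair is examined and emitted. As the failure-to-collide probability is at most $(1-p_1')^{L}\le e^{-2}$, a union bound over these three events shows $(x,y)$ is missed in one outer iteration with probability at most $1/8+1/8+e^{-2}<1/2$. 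The $3\log N$ outer iterations use fresh randomness, so $(x,y)$ is missed by all of them with probability at most $2^{-3\log N}=N^{-3}$, and a union bound over the at most $N^2$ near pairs gives overall failure probability $\BO{1/N}$.

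For the I/O bound I would bound $\sum_v|R_v||S_v|$, summed over the $L$ inner iterations of one outer iteration, by splitting according to distance. The far part is controlled \emph{deterministically} by the eviction rule: charging each far comparison to its endpoint in $R$, a point contributes at most $8LM+M/2=\BO{LM}$ far comparisons before its counter crosses the threshold and it is removed, so the total far work per outer iteration is $\BO{N\cdot LM}$, i.e.\ $\BO{NL/B}=\BO{(N/M)^{\rho}N/B}$ I/Os; combined with the $L$ partition/scan steps this yields the first term. The remaining two parts I would bound \emph{in expectation} by linearity: a near pair collides at most $\BO{L}$ times in expectation, giving expected near work $\BO{L\,|R\bowtie_{\le r}S|}$ and hence the second term; and, by monotonicity, a $c$-near pair collides with probability at most $p_1'$, so over $L$ draws its expected number of collisions is at most $Lp_1'=2$, giving expected $c$-near work $\BO{|R\bowtie_{\le cr}S|}$ and the third term. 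Multiplying by the $3\log N$ outer iterations and folding all logarithmic factors into $\TO{\cdot}$ gives the stated bound.

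The main obstacle is the far pairs: because LSH values are correlated by definition, one cannot invoke a concentration inequality to argue that $\sum_v|R_v||S_v|$ is close to its expectation. The counter-and-eviction device is exactly what circumvents this, capping the far work at $\BO{NLM}$ in the \emph{worst case} while the Markov bound above certifies that the cap is slack enough to evict a fixed near point only with probability $1/8$ per outer iteration, so correctness survives. The secondary point to get right is the monotonicity argument that pins the $c$-near collision probability at $\le p_1'$ (the collision probability at distance $r$); this is what strips the factor $L$ from the last term and leaves it as a purely additive $|R\bowtie_{\le cr}S|/(MB)$ contribution.
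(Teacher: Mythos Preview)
Your proposal follows the same decomposition and reasoning as the paper's proof: the $\TO{L\cdot N/B}$ partition cost per outer round; the counter-and-eviction device to cap far-pair work at $\BO{NLM}$ deterministically per outer round; monotonicity to pin the $c$-near collision probability at $\le p_1'$; and the correctness argument via Markov on each endpoint's far-collision count (giving $\le 1/8$ eviction probability per point) combined with the $(1-p_1')^L\le e^{-2}$ miss probability, then amplified over $3\log N$ independent outer rounds and a union bound over $\le N^2$ pairs. Your explicit shadow-process coupling is a clean way to justify the Markov step; the paper leaves this implicit.

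The one substantive difference is that you state the near and $c$-near I/O contributions only \emph{in expectation}, whereas the paper matches the ``with probability $1-1/N$'' phrasing of the theorem. For near pairs the paper simply observes the bound is deterministic (a pair can collide at most once per inner iteration, hence $C_n\le 3L\log(N)\,|R\bowtie_{\le r}S|$), which your $\BO{L}$-per-pair bound also gives once you drop the unnecessary ``in expectation''. For $c$-near pairs the paper upgrades the expected bound of $2|R\bowtie_{\le cr}S|$ per outer round to a high-probability bound by applying a Chernoff inequality across the $3\log N$ independent outer rounds. This is the only step you would need to add to match the theorem as stated; the overall architecture is the same.
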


\begin{proof}

We observe that the I/O cost of Steps~\ref{step:A1}-\ref{step:A2}, that is of partitioning sets $R$ and $S$ according to a hash function $h'_i$, is $L \cdot \text{sort}(N) = \TO{(\frac{N}{M})^\rho \cdot \frac{N}{B}}$ for $L \leq (\frac{N}{M})^\rho$ repetitions\footnote{We let $\text{sort}(N) = \BO{(N/B) \log_{M/B} (N/B)}$ be shorthand for the I/O complexity \cite{Vitter08} of sorting $N$ points.}. 

We now consider the I/O cost of an iteration of the loop in Step~\ref{step:A3} for a given hash value $v$. 
When the size of one bucket (say $R_v$) is smaller than $M/2$, we are able to load the whole $R_v$ into the internal memory and then load consecutive blocks of $S_v$ to execute join operations. 
Hence, the I/O cost of this step is at most $(|R_v|+|S_v|)/B$.
The total I/O cost of the $3L\log{(N)}$ iterations of Step~\ref{step:A3} among
all possible hash values where at least one bucket has size smaller than $M/2$
is at most $L \cdot \frac{N}{B} = 2(\frac{N}{M})^\rho \cdot \frac{N}{B}$  I/Os.

The I/O cost of Step~\ref{step:A3} when both buckets $R_v$ and $S_v$ are larger than $M/2$ is $2|R_v||S_v|/(BM)$. 
This means that the amortized cost of each pair in $R_v\times S_v$ is $2/(BM)$. 
Therefore the amortized I/O cost of all iterations of Step~\ref{step:A3}, when there are no bucket size less than $M/2$, can be upper bounded by multiplying the total number of generated pairs by $2/(BM)$.
Based on this observation, we classify and enumerate generated pairs into three groups: near pairs, $c$-near pairs and far pairs. We denote by $C_n$, $C_{cn}$ and $C_f$ the respective size of each group, and upper bound these quantities to derive the proof.
\begin{enumerate}

\item \emph{Number of near pairs.} By definition, LSH gives a lower bound on the probability of collision of near pairs. 
It may happen that the collision probability of near pairs is~1. 
Thus, two near points might collide in all $L$ repetitions of
Step~\ref{step:Arepeat} and in all $3\log{(N)}$ repetitions of
Step~\ref{step:logNrepeat}. 
This means that $C_n \leq 3L\log{(N)}|R \bowtie_{\leq r} S|$. Note that this
bound is a deterministic worst case bound.

\item \emph{Number of $c$-near pairs}. Any $c$-near pair from $R \bowtie_{\leq cr} S$ appears in a bucket with probability at most $p'_1$ due to monotonicity of our LSH family. 
Since we have $L=2/p'_1$ repetitions, each $c$-near pair collides at most~2 in
expectation. 
In other words, the expected number of $c$-near pair collisions among $L$
repetitions is at most $2|R \bowtie_{\leq cr} S|$.
By using the Chernoff bound~\cite[Exercise 1.1]{PanconesiDubhashiBook} with
$3\log{(N)}$ independent $L$ repetitions (in Step~\ref{step:logNrepeat}), we
have 

\begin{displaymath}
\begin{aligned}
\Pr{C_{cn} \geq 6\log{(N)}|R \bowtie_{\leq cr} S| } &\leq 1/N^2, \\
\Pr{C_{cn} \leq 6\log{(N)}|R \bowtie_{\leq cr} S| } &\geq 1 - 1/N^2.
\end{aligned}
\end{displaymath}

\item \emph{Number of far pairs}. If $x \in R \cup S$ is far away from all
points, the expected number of collisions of $x$ in $L$ hash table (including
duplicates) is at most $8LM$, since then the point is removed by
Step~\ref{step:remove}.
Hence the total number of examined far pairs is $C_f \leq
24NLM\log{(N)}$.

\end{enumerate}
Therefore, by summing the number of near pairs $C_n$, $c$-near pairs $C_{cn}$, and far pairs $C_f$, and multiplying these quantities by the amortized I/O complexity $2/(BM)$, we upper bound the I/O cost of all iterations of Step~\ref{step:A3}, when there are no buckets of size less than $M/2$, is
$$
\TO{\left(\frac{N}{M}\right)^\rho \left(\frac{N}{B}+ \frac{|R \bowtie_{\leq r} S|}{BM}\right)+\frac{|R \bowtie_{\leq cr} S|}{BM}},
$$
with probability at least $1-1/N^2$.
By summing all the previous bounds, we get the claimed I/O bound with high probability.

%
%
%
%

We now  analyze the probability to enumerate all near pairs.
Consider one iteration of Step~\ref{step:logNrepeat}. A near pair is not emitted
if at least one of the following events happen:
\begin{enumerate}
 \item The two points do not collide in the same bucket in each of the $L$
iterations of Step~\ref{step:Arepeat}. This happens with probability
$(1-p'_1)^L = (1-p'_1)^{2/p'_1} \leq 1/e^2$.
 \item One of the two points is removed by Step~\ref{step:remove} because it
collides with more than $8 LM$ far points.
By the Markov's inequality and since there are at most $N$ far points, the
probability that $x$ collides with at least $8 LM$ points in the $L$ iterations
is at most $1/8$. Then, this event happens with probability at most $1/4$.
\end{enumerate}
Therefore, a near pair does not collide in one  iteration of
Step~\ref{step:logNrepeat} with probability at most $1/e^2+1/4<1/2$ and never
collides in the $3\log N$ iterations with probability at most $(1/2)^{3 \log
N}=1/N^{3}$.
Then, by an union bound, it follows that all near pairs (there are at most $N^2$
of them) collide with probability at least $1-1/N$ and the theorem
follows.
%
%
\qed
\end{proof}

As already mentioned in the introduction, a near pair $(x,y)$ can be emitted many times during the algorithm since points $x$ and $y$ can be hashed on the same value in $p(x,y)L$ rounds of Step~\ref{step:Arepeat}, where $p(x,y)\geq p'_1$ denotes the actual collision probability.
A simple approach for avoiding duplicates is the following: for each near pair found during the $i$-th iteration of Step~\ref{step:Arepeat}, the pair is emitted only if the two points did not collide by all hash functions used in the previous $i-1$ rounds.
The check starts from the hash function used in the previous round and backtracks until a collision is found or there are no more hash functions. This approach increases the worst case complexity by a factor $L$.
Section~\ref{sec:remove-duplicates} shows a more efficient randomized algorithm that reduces the number of replica per near pair to a constant. 
This technique also applies to the cache-oblivious algorithm described in the next section.

\subsection{Cache-oblivious algorithm: \SimJoin\ }\label{sec:cacheobl}

The above cache-aware algorithm uses an $(r,cr, p'_1, p'_2)$-sensitive family of functions $\mathcal H'$, with  $p'_1\sim (M/N)^\rho$ and $p'_2\sim M/N$, for partitioning the initial sets into smaller buckets, which are then efficiently processed in the internal memory using the nested loop algorithm. 
If we know the internal memory size $M$, this LSH family can be constructed by concatenating $\lceil \log_{p_2} (M/N)\rceil$ hash functions from any given primitive $(r,cr,p_1,p_2)$-sensitive family $\mathcal H$. Without knowing $M$ in the cache-oblivious setting, such family cannot be built. Therefore, we propose \SimJoin, a cache-oblivious algorithm that efficiently computes the similarity join without knowing the internal memory size $M$ and the block length $B$.

\SimJoin\ uses as a black-box a given monotonic $(r,cr, p_1, p_2)$-sensitive family $\mathcal H$.\footnote{The monotonicity requirement can be relaxed to the following: $\Pr{h(x)=h(y)} \geq \Pr{h(x')=h(y')}$ for every two pairs $(x,y)$ and $(x',y')$ where $d(x,y)\leq r$ and $d(x',y')>r$. A monotonic LSH family clearly satisfies this assumption.}
The value of $p_1$ and $p_2$ can be considered constant in a practical scenario.
As common in cache-oblivious settings, we use a recursive approach for splitting the problem into smaller and smaller subproblems that at some point will fit the internal memory, although this point is not known in the algorithm.
We first give a high level description of the cache-oblivious algorithm and an intuitive explanation. 
We then provide a more detailed description and analysis.

\begin{algorithm}[!t]
\SetAlgoRefName{\SimJoin$(R,S, \psi)$}
\SetKwSty{text}
\caption{$R,S$ are the input sets, and $\psi$ is the recursion depth.}\label{algo:simjoin}
\small

\textbf{If} $|R|>|S|$, \textbf{then} swap (the references to) the sets such that $|R|\leq
|S|$\label{step:begin}\;

\textbf{If} $\psi=\Psi$ or $|R|\leq 1$, \textbf{then} compute $R \bowtie_{\leq r} S$ using the
algorithm of Theorem~\ref{thm:co-nested-loop} and return\label{step:basecase}\;

Pick a random sample $S'$ of $18\Delta$ points from $S$ (or all points if $|S|<18\Delta$)\label{step:sample}\;

Compute $R'$ containing all points of $R$ that have distance smaller than $cr$ to at least half points in $S'$\label{step:estimate}\;  

Compute $R' \bowtie_{\leq r} S$ using the algorithm of Theorem~\ref{thm:co-nested-loop}\label{step:nested-loop}\;

\SetKwBlock{R}{\textbf{Repeat} $L=1/p_1$ times\label{step:repeat}}{}
\R{
 Choose $h\in\mathcal{H}$ uniformly at random\label{step:repeat1}\;
 Use $h$ to partition (in-place) $R\backslash R'$ and $S$ in buckets $R_v$, $S_v$ of points with hash value~$v$\label{step:repeat2}\;
 \textbf{For} each $v$ where $R_v$ and $S_v$ are nonempty, recursively call \newline {\sc \SimJoin\ }$(R_v,S_v, \psi+1)$\label{step:repeat3}\;
}

\end{algorithm}

\SimJoin\ receives in input the two sets $R$ and $S$ of similarity join, and a parameter $\psi$ denoting the depth in the recursion tree (initially, $\psi=0$) that is used for recognizing the base case. 
Let  $|R| \leq |S|$, $N=|R|+|S|$, and denote with $\Delta=\log N$ and $\Psi = \lceil \log_{1/p_2} N\rceil$ two global values that are kept invariant in the recursive levels and computed using the initial input size $N$. 
For simplicity we assume that $1/p_1$ and $1/p_2$ are integers, and further assume without loss of generality that the initial size $N$ is a power of two. 
Note that, if $1/p_1$ is not an integer, the last iteration in Step~\ref{step:repeat} can be performed with a random variable $L\in \{\lfloor 1/p_1\rfloor,\lceil 1/p_1\rceil\}$ such that $\E{L}=1/p_1$.

\SimJoin\ works as follows. 
If the problem is currently at the recursive level $\Psi = \lceil \log_{1/p_2} N\rceil$ or $|R|\leq 1$, the recursion ends and  the problem is solved using the cache-oblivious nested loop described in Theorem~\ref{thm:co-nested-loop}. 
Otherwise, the following operations are executed.
By exploiting sampling, the algorithm identifies a subset $R'$ of $R$ containing (almost) all points that are near or $c$-near to a constant fraction of points in $S$ (Steps~\ref{step:sample} -- \ref{step:estimate}).
Then we compute $R' \bowtie_{\leq r} S$ using the cache-oblivious nested-loop of Theorem~\ref{thm:co-nested-loop} and remove points in $R'$ from $R$ (Step~\ref{step:nested-loop}).
Subsequently, the algorithm repeats $L=1/p_1$ times the following operations: a hash function is extracted from the $(r,cr, p_1, p_2)$-sensitive family and used for partitioning $R$ and $S$ into buckets, denoted with $R_v$ and $S_v$ with any hash value $v$ (Steps~\ref{step:repeat1} -- \ref{step:repeat2});
then, the join $R_v \bowtie_{\leq r} S_v$ is computed recursively by \SimJoin (Step~\ref{step:repeat3}).

The explanation of our approach is the following. 
By recursively partitioning input points with hash functions from $\mathcal H$, the algorithm decreases the probability of collision between two far points. 
In particular, the collision probability of a far pair is $p_2^i$ at the $i$-th recursive level.
On the other hand, by repeating the partitioning $1/p_1$ times in each level, the algorithm  guarantees that a near pair is enumerated with constant probability since the probability that a near pair collide is $p_1^i$ at the $i$-th recursive level.
It deserves to be noticed that the collision probability of far and near pairs at the recursive level $\log_{1/p_2} (N/M)$ is $\BT{M/N}$ and $\BT{(M/N)^\rho}$, respectively, which are asymptotically equivalent to the values in the cache-aware algorithm. 
In other words, the partitioning of points at this level is equivalent to the one in the cache-aware algorithm with collision probability for a far pair $p_2' = M/N$.
Finally, we observe that, when a point in $R$ becomes close to many points in $S$, it is more efficient to detect and remove it, instead of propagating it down to the base cases. 
This is due to the fact that the collision probability of very near pairs is always large (close to~1) and the algorithm is not able to split them into subproblems that fit in memory.




\subsection{I/O Complexity and Correctness of \SimJoin\ }\label{sec:analysisIO}

\subsubsection{Analysis of I/O Complexity.}

We will bound the \emph{expected} number of I/Os of the algorithm rather than the worst case. 
This can be converted to an high probability bound by running $\log N$ parallel instances of our algorithm 
(without loss of generality we assume that the optimal cache replacement 
splits the cache into $M/\log N$ parts that are assigned
to each instance).
The total execution stops when the first parallel instance terminates, which with probability at least 
$1 - 1/N$ is within a logarithmic factor of the expected I/O bound (logarithmic factors 
are absorbed in the $\tilde{\Osymbol}$-notation).

For notational simplicity, in this section we let $R$ and $S$ denote the initial input sets and let $\tilde R$ and $\tilde S$ denote the subsets given in input to a particular recursive subproblem (note that, due to Step~\ref{step:begin}, $\tilde R$  can denote a subset of $R$ but also of $S$; similarly for $\tilde S$). 
We also let $\tilde S'$ denote the sampling of $\tilde S$ in Step~\ref{step:sample}, and with $\tilde R'$ the subset of $\tilde R$ computed in Step~\ref{step:estimate}. 
Lemma~\ref{lemma:sample} says that two properties of the choice of random sample in Step~\ref{step:sample} are almost certain, and the proof relies on Chernoff bounds on the choice of $\tilde S'$. 
In the remainder of the paper, we assume that Lemma~\ref{lemma:sample} holds and refer to this event as $\mathcal{A}$ holding with probability $1-\BO{1/N}$.

\begin{lemma}\label{lemma:sample}
With probability at least $1-\BO{1/N}$ over the random choices in Step~\ref{step:sample},
the following bounds hold for every  subproblem \SimJoin$(\tilde R,\tilde S,\psi)$:
%
\begin{equation}\label{eq:manyclose}
|\tilde R' \underset{\leq cr}{\bowtie} \tilde S| > \frac{|\tilde R'| |\tilde
S|}{6} \enspace,
\end{equation}
\begin{equation}\label{eq:manyfar}
 |(\tilde R\backslash \tilde R') \underset{> cr}{\bowtie} \tilde S| >
\frac{|\tilde R\backslash \tilde R'| |\tilde S|}{6} \enspace .
\end{equation}
\end{lemma}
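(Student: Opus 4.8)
The plan is to reduce both displayed inequalities to a per-point classification guarantee and then to union bound over all points and all recursive subproblems. Fix a subproblem with input sets $\tilde R,\tilde S$, and for a point $x\in\tilde R$ let $q(x)=|\{y\in\tilde S: d(x,y)\leq cr\}|/|\tilde S|$ be the fraction of $\tilde S$ that is near or $c$-near to $x$. Each of the $18\Delta$ points sampled into $\tilde S'$ lies within distance $cr$ of $x$ independently with probability $q(x)$, so the number of such sampled points is a binomial with mean $18q(x)\Delta$, and $x$ is placed in $\tilde R'$ exactly when this count reaches $9\Delta$. I would isolate two bad events for $x$: (i) $q(x)\leq 1/6$ yet $x\in\tilde R'$, and (ii) $q(x)\geq 5/6$ yet $x\notin\tilde R'$. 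Both reduce to a binomial of mean at most $3\Delta$ exceeding $9\Delta$ (for (ii), apply the bound to the count of sampled points at distance greater than $cr$, whose mean is at most $3\Delta$). A Chernoff bound then gives failure probability $N^{-\Omega(1)}$, and the constant $18$ in the sample size is calibrated precisely so that this exponent exceeds $1+\rho$.

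The point of these two events is that their simultaneous absence yields both displayed bounds. If no point of $\tilde R$ suffers a bad event, then every $x\in\tilde R'$ has $q(x)>1/6$, so summing over $\tilde R'$ gives $|\tilde R' \underset{\leq cr}{\bowtie} \tilde S|=\sum_{x\in\tilde R'}q(x)|\tilde S|>|\tilde R'||\tilde S|/6$, which is \eqref{eq:manyclose}; symmetrically every $x\in\tilde R\setminus\tilde R'$ has $q(x)<5/6$, so more than a $1/6$ fraction of $\tilde S$ is far from $x$, giving \eqref{eq:manyfar}. The intermediate regime $q(x)\in(1/6,5/6)$ needs no control, since either classification of $x$ is consistent with both inequalities; and the degenerate case $|\tilde S|<18\Delta$, where $\tilde S'=\tilde S$ and $q(x)$ is computed exactly, makes both bounds hold deterministically.

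It remains to union bound over all subproblems, and here I would use two facts. First, the recursion has depth at most $\Psi$, and each of the $L=1/p_1$ repetitions at a node merely repartitions that node's points among its children, so the total input size summed over all nodes is at most $\sum_{d=0}^{\Psi} L^d N = \BO{N^{1+\rho}}$, using $L^{\Psi}=N^{\rho}$ and $L>2$; in particular $\sum_{\nu}|\tilde R_\nu|=\BO{N^{1+\rho}}$ deterministically. Second, the sample drawn in Step~\ref{step:sample} at a node is used only to form that node's $\tilde R'$, which affects the inputs of descendant nodes but not the node's own pair $(\tilde R_\nu,\tilde S_\nu)$; those are determined by the randomness used strictly above $\nu$ (all hash functions, together with the samples at proper ancestors). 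Conditioning on that ancestor randomness therefore fixes $(\tilde R_\nu,\tilde S_\nu)$ while leaving $\nu$'s fresh sample independent, so the per-node Chernoff estimate applies with the node's sets treated as fixed. Consequently the expected number of (node, point) pairs that suffer a bad event is at most $\BO{N^{1+\rho}}\cdot N^{-\Omega(1)}=\BO{1/N}$, and Markov's inequality bounds the probability that any node fails. The main obstacle is exactly this dependence: because $\tilde R'$ is removed before the partitioning in Step~\ref{step:repeat2}, the collection of subproblems is itself a random object correlated with the samples, and the argument succeeds only because each fresh sample is independent of the sets it is asked to classify.
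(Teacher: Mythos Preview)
Your proposal is correct and follows the same approach as the paper: a Chernoff bound on the $18\Delta$-sample gives per-point misclassification probability $N^{-\Omega(1)}$, and a union bound over all (point, subproblem) pairs across the recursion yields the global guarantee. The paper bounds the number of such pairs more crudely by $N\cdot 2L^{\Psi}<2N^{3}$ (versus your $\BO{N^{1+\rho}}$) and leaves implicit the conditioning argument, the degenerate case $|\tilde S|<18\Delta$, and the irrelevance of the regime $q(x)\in(1/6,5/6)$, but these are refinements rather than differences in strategy.
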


\begin{proof}


Let $x\in \tilde R$ be a point which is $c$-near to at most one sixth of the points in $\tilde S$, 
i.e.,~$|x \enspace {\bowtie}_{\leq cr} \tilde S| \leq |\tilde S|/6$.
The point $x$ enters $\tilde R'$ if there are at least $9\Delta$ $c$-near points in 
$\tilde S'$ and this happens, for a 
Chernoff bound~\cite[Theorem 1.1]{PanconesiDubhashiBook},  with probability at most $1/N^4$.
Each point of $R\cup S$  appears in at most $2 \sum_{i=0}^{i=\Psi - 1} L^i < 2 L^\Psi < 2N^2$ subproblems 
and there are at most $N$ points in $R\cup S$.
Therefore, with probability $1-2N^3N^{-4} = 1-2N^{-1}$, we have that 
in every subproblem \SimJoin$(\tilde R,\tilde S,\psi)$ no point with at most $|\tilde S|/6$ $c$-near 
points in $\tilde S$ is in $\tilde R'$. Hence each point in $\tilde R$ has at least  $|\tilde S|/6$ $c$-near 
points in $\tilde S$, and the bound in Equation~\ref{eq:manyclose} follows.

We can similarly show that,  with probability $1-2N^3N^{-4} = 1-2N^{-1}$, we have that 
in every subproblem \SimJoin$(\tilde R,\tilde S,\psi)$ all points with at least  $5|\tilde S|/6$ $c$-near 
points in $\tilde S$ are in $\tilde R'$. Then, each point in $\tilde R\backslash \tilde R'$ has $ |\tilde S|/6$
far points in $\tilde S$ and Equation~\ref{eq:manyfar} follows.
\qed

\end{proof}

To analyze the number of I/Os for subproblems of size more than $M$ we bound the cost in terms of different types of \emph{collisions} of pairs in $R\times S$ that end up in the same subproblem of the recursion. 
We say that $(x,y)$ \emph{is in} a particular subproblem \SimJoin$(\tilde{R},\tilde{S}, \psi)$ if $(x,y)\in (\tilde{R}\times \tilde{S}) \cup (\tilde{S}\times \tilde{R})$. 
Observe that a pair $(x,y)$ is in a subproblem if and only if $x$ and $y$ have colliding hash values on every step of the call path from the initial invocation of \SimJoin.

\newcommand{\C}[2][i]{C_{#1}\left({#2} \right)}
\begin{definition}

Given $Q\subseteq R\times S$ let $\C{Q}$ be the number of times a pair in $Q$ is in a call to \SimJoin\ at the $i$-th level of recursion. 
We also let $\C[i,k]{Q}$, with $0\leq k \leq \log M$, denote the number of times a pair in $Q$ is in a call to \SimJoin\ at the $i$-th level of recursion where the smallest input set has size in $[2^k, 2^{k+1})$ if $0\leq k <\log M$, and in $[M, +\infty)$ if $k = \log M$. 
The count is over all pairs and with multiplicity, so if $(x,y)$ is in several subproblems at the $i$-th level, all these are counted.

\end{definition}

Next we bound the I/O complexity of \SimJoin\ in terms of $\C{R \bowtie_{\leq cr} S}$ and $\C[i,k]{R \bowtie_{> cr} S}$, for any $0\leq i < \Psi$. 
We will later upper bound the expected size of these quantities in Lemma~\ref{lemma:expectation} and then get the claim of Theorem~\ref{thm:main}. 

\begin{lemma}\label{lem:collision-bound}

Let $\ell=\lceil \log_{1/p_2} (N/M)\rceil$ and $M\geq 18\log N + 3B$. Given that $\mathcal A$ holds,  the I/O complexity of \SimJoin$(R,S,0)$ is
\[
\TO{\frac{NL^{\ell}}{B}+ \sum_{i=0}^\ell \frac{\C{R \underset{\leq
cr}{\bowtie} S}}{MB}  + \sum_{i=\ell}^{\Psi-1}\sum_{k=0}^{\log
M}\frac{\C[i,k]{ R
\underset{> cr}{\bowtie}  S} L }{B 2^k}}\]

\end{lemma}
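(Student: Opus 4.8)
The plan is to traverse the recursion tree and charge every block transfer made by \SimJoin\ to one of the three summands, splitting the cost of a single call \SimJoin$(\tilde R,\tilde S,\psi)$ into its non-recursive parts: the $L$-fold in-place partitioning of Step~\ref{step:repeat2}, the nested-loop evaluation of $\tilde R' \bowtie_{\leq r} \tilde S$ in Step~\ref{step:nested-loop} (together with the scans of Steps~\ref{step:sample}--\ref{step:estimate} that build $\tilde R'$), and the base-case loop of Step~\ref{step:basecase}. The enabling observation is the cache-oblivious guarantee of Theorem~\ref{thm:co-nested-loop}: as soon as a subproblem together with its $18\log N$-point sample and the $\BO{B}$ block buffers fits in internal memory---which is exactly what the hypothesis $M \geq 18\log N + 3B$ secures---its whole recursive subtree runs without further I/O after one load of $\BO{(|\tilde R|+|\tilde S|)/B}$ blocks. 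It therefore suffices to account for the load cost of the maximal in-cache subproblems and the non-recursive cost of the subproblems exceeding memory.

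First I would bound the partitioning at the upper levels $0\leq i\leq \ell$. At level $i$ each original point lies in at most $L^i$ subproblems (one fresh copy per repetition per level), so the total size of all level-$i$ subproblems, and hence the I/O of one partitioning pass over them, is $\TO{NL^i/B}$ by the sorting bound. Since $p_1<1/2$ gives $L=1/p_1>2$, summing this geometric series over $0\leq i\leq \ell$ is dominated by its last term and yields the first summand $\TO{NL^\ell/B}$; the loads of any in-cache subproblem that first appears at a level $\leq \ell$ are charged against the same point-occurrence count.

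Next I would bound the $\tilde R'\bowtie_{\leq r}\tilde S$ loops. By Theorem~\ref{thm:co-nested-loop} each costs $\BO{(|\tilde R'|+|\tilde S|)/B + |\tilde R'||\tilde S|/(MB)}$; the linear scan folds into the partitioning cost, while for the quadratic part I invoke Equation~\ref{eq:manyclose}, which guarantees $|\tilde R'||\tilde S| < 6\,|\tilde R'\bowtie_{\leq cr}\tilde S|$. Thus this cost is at most six times the $c$-near collisions the subproblem contains, divided by $MB$; summing over all subproblems at a level and over $0\leq i\leq \ell$ gives the second summand. At levels $i>\ell$ these loops lie inside cache-resident subproblems and are paid for by the single load of the previous paragraph, so they need not be charged separately here.

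The main obstacle is the third summand: charging the loads and non-recursive costs at the lower levels $\ell\leq i<\Psi$ (base cases at depth $\Psi$ included) to far-pair collisions. For a subproblem whose smaller input set has size $|\tilde R|\in[2^k,2^{k+1})$, the relevant cost is a load of $\BO{|\tilde S|/B}$ if it fits in cache and $\TO{L|\tilde S|/B}$ of partitioning otherwise, and this must be dominated by $\C[i,k]{R \bowtie_{> cr} S}\,L/(B 2^k)$. Here I use Equation~\ref{eq:manyfar}: every surviving point of $\tilde R\setminus\tilde R'$ is far from at least $|\tilde S|/6$ points of $\tilde S$, so the subproblem contains more than $|\tilde R\setminus\tilde R'|\,|\tilde S|/6$ far pairs. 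When a constant fraction of $\tilde R$ survives the removal of $\tilde R'$ this is $\BOM{2^k|\tilde S|}$, the cost becomes $\TO{L\cdot(\text{far pairs})/(2^k B)}$ as required, and the factor $L$ absorbs the $L$ partition repetitions; for non-cache subproblems one additionally uses $|\tilde S|>M/2$ to control the $2^k\leq M$ factor against the quadratic loop term. The delicate case is when $\tilde R'$ swallows most of $\tilde R$: then too few points survive for Equation~\ref{eq:manyfar} to pay for a full scan of $\tilde S$, but the dominant work is the $\tilde R'\bowtie_{\leq r}\tilde S$ loop, whose cost is instead governed by $c$-near collisions. Making these two regimes fit together---so that each block transfer is charged exactly once and the size-class bookkeeping with its $2^k$ and $L$ weights closes---is where the argument needs the most care; I would resolve it with a case split on whether $|\tilde R'|$ exceeds $|\tilde R|/2$, charging the surviving-point regime to far pairs via Equation~\ref{eq:manyfar} and the absorbed regime to the $c$-near collisions it creates, then verifying that the latter contribution is still captured by a constant-factor inflation of the second summand.
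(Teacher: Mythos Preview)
Your decomposition mirrors the paper's closely, and the first two summands are handled essentially as the paper does. The gap is in the third.

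First, your claim that at levels $i>\ell$ the $\tilde R'\bowtie_{\leq r}\tilde S$ loops ``lie inside cache-resident subproblems'' is false: $\ell$ is only the level at which \emph{expected} bucket sizes drop to $M$, and individual subproblems can be arbitrarily larger. You implicitly concede this later when you treat non-cache subproblems at levels $\geq \ell$, so the earlier sentence is at best inconsistent.

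The substantive problem is your ``delicate case'' where $\tilde R'$ absorbs most of $\tilde R$ at some level $i>\ell$. You propose to charge that regime to $c$-near collisions and absorb it into ``a constant-factor inflation of the second summand''---but the second summand is $\sum_{i=0}^\ell \C{R\bowtie_{\leq cr}S}/(MB)$ and stops at $\ell$, so $c$-near collisions created at level $i>\ell$ simply are not counted there. The case split as stated does not close the argument.

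The paper sidesteps this entirely by a shift in the bookkeeping: instead of charging Step~\ref{step:nested-loop} at level $i$, it bundles Step~\ref{step:nested-loop} at level $i{+}1$ together with the partitioning (Step~\ref{step:repeat}) at level~$i$. The point is that the input to every level-$(i{+}1)$ child is a subset of $\tilde R\setminus\tilde R'$, so the aggregate Step-\ref{step:nested-loop} cost over one iteration is at most $|(\tilde R\setminus\tilde R')\bowtie_{\leq cr}\tilde S|/(MB)\leq |\tilde R\setminus\tilde R'|\,|\tilde S|/(MB)$. Combined with the $\TO{(|\tilde R\setminus\tilde R'|+|\tilde S|)/B}$ partitioning cost this gives $\TO{|\tilde R\setminus\tilde R'|\,|\tilde S|/(B\min\{M,|\tilde R\setminus\tilde R'|\})}$, and now Equation~\ref{eq:manyfar} applies directly to $\tilde R\setminus\tilde R'$---no case split is needed, because the set on which one needs the far-pair lower bound is already the surviving set. (The paper does remark in a footnote that $\C[i,k]{\cdot}$ is indexed by $|\tilde R|$ rather than $|\tilde R\setminus\tilde R'|$; this discrepancy is harmless because the subsequent expectation bound in Lemma~\ref{lemma:expectation} is computed assuming the worst case $\tilde R'=\emptyset$.) Once you adopt this level shift, the remainder of your argument goes through.
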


\begin{proof} 


To ease the analysis we assume that no more than 1/3 of internal memory is used to store blocks containing elements of $R$ and $S$, respectively. 
Since the cache-oblivious model assumes an optimal cache replacement policy this cannot decrease the I/O complexity. 
Also, internal memory space used for other things than data (input and output buffers, the recursion stack of size at most $\Psi$) is less than $M/3$ by our assumption that $M\geq 18\log n+3B$. 
As a consequence, we have that the number of I/Os for solving a subproblem \SimJoin\ $(\tilde{R},\tilde{S}, \cdot)$ where $|\tilde{R}|\leq
M/3$ and $|\tilde S|\leq M/3$ is $\BO{(|\tilde{R}|+|\tilde{S}|)/B}$, including all recursive calls. 
This is because there is space $M/3$ dedicated to both input sets and only I/Os for reading the input are required. 
By charging the cost of such subproblems to the writing of the inputs in the parent problem, we can focus on subproblems where the largest set (i.e., $\tilde{S}$) has size more than $M/3$.
We notice that the cost of Steps~\ref{step:sample}--\ref{step:estimate} is dominated by other costs by our assumption that the set $\tilde S'$ fits in internal memory, which implies that it suffices to scan data once to implement these steps. 
This cost is clearly negligible with respect to the remaining steps and thus we ignore them.

We first provide an upper bound on the I/O complexity required  by all subproblems at a recursive level above $\ell$. 
Let \SimJoin\ $(\tilde{R},\tilde{S},i)$ be a recursive call at the $i$-th recursive level, for $0 \leq i \leq \ell$. 
The I/O cost of the nested loop join in Step~\ref{step:nested-loop} in \SimJoin\ $(\tilde{R},\tilde{S},i)$ is $\BO{|\tilde{S}|/B + |\tilde{R}'| |\tilde{S}| / (MB)}$ by Theorem~\ref{thm:co-nested-loop}. 
We can ignore the $\BO{|\tilde{S}|/B}$ term since it is asymptotically negligible with respect to the cost of each iteration of Step~\ref{step:repeat}, which is upper bounded later. 
By Equation~\ref{eq:manyclose}, we have that $\tilde{R}' \bowtie_{\leq cr} \tilde{S}$ contains more than
$|\tilde{R}'| |\tilde{S}|/6$ pairs, and thus the cost of Step~\ref{step:nested-loop} in \SimJoin$(\tilde{R},\tilde{S},i)$ is $\BO{|\tilde
R'\bowtie_{\leq cr} \tilde S|/(MB)}$. 
This means that we can bound the total I/O cost of all executions of Step~\ref{step:nested-loop} at level $i$ of the recursion with $\BO{\C{R \bowtie_{\leq cr} S}/(MB)}$ since each near pair $(x,y)$ appears in $C_i((x,y))$  subproblems at level $i$.

The second major part of the I/O complexity is the cost of preparing recursive calls in \SimJoin$(\tilde{R},\tilde{S},i)$ (i.e., Steps~\ref{step:repeat1}--\ref{step:repeat2}). 
In fact, in each iteration of Step~\ref{step:repeat}, the I/O cost is $\TO{(|\tilde R|+|\tilde S|)/B}$, which includes the cost of hashing and of sorting to form buckets. 
Since each point of $\tilde R$ and $\tilde S$ is replicated in $L$ subproblems in Step~\ref{step:repeat}, we have that each point of the initial sets $R$ and $S$ is replicated $L^{i+1}$ times at level $i$. 
Since the average cost per entry is $\TO{1/B}$, we have that the total cost for preparing  recursive calls at level
$i$ is $\TO{N L^{i+1}/B}$.
By summing the above terms, we have that the total I/O complexity of all subproblems in the $i$-th recursive level is upper bounded by:

\begin{equation}\label{eq:belowL}
\TO{\frac{\C{R \underset{\leq cr}{\bowtie} S}}{MB} + \frac{NL^{i+1}}{B}}.
\end{equation}

We now focus our analysis to  bound the I/O complexity required by all subproblems at a recursive level below $\ell$.
Let again \SimJoin$(\tilde{R},\tilde{S},i)$ be a recursive call at the $i$-th recursive level, for $\ell \leq i\leq \Psi$.
We observe that (part of) the cost of a subproblem at level $i \geq \ell$ can be upper bounded by a suitable function of collisions among far points in
\SimJoin\ $(\tilde{R},\tilde{S},i)$.
More specifically, consider an iteration of Step~\ref{step:repeat} in a subproblem at level $i$. 
Then, the cost for preparing the recursive calls and for performing Step~\ref{step:nested-loop} in each  subproblem (at level $i+1$) generated during the iteration, can be upper bounded as
\[
\TO{\frac{|\tilde R\backslash \tilde R'|+|\tilde S|}{B}+\frac{|(\tilde
R\backslash \tilde R') \bowtie_{\leq cr}\tilde S|}{BM}}\text{,}\]
since each near pair in $(\tilde R\backslash \tilde R') \bowtie_{\leq cr} \tilde S$ is found in Step~\ref{step:nested-loop} in at most one subproblem at level $i+1$ generated during the iteration. 
Since we have that $|(\tilde R\backslash \tilde R') \bowtie_{\leq cr}\tilde S|\leq |\tilde R\backslash \tilde R'| |\tilde S|$, we easily get that the above bound can be rewritten as $\TO{{|\tilde R\backslash \tilde R'||\tilde S|}/({B \min\{M,|\tilde R\backslash \tilde R'|\})}}$. 
We observe that this bound holds even when $i=\Psi-1$: in this case the cost includes all I/Os required for solving the subproblems at level $\Psi$ called in the iteration and which are solved using the nested loop in Theorem~\ref{thm:co-nested-loop} (see Step~\ref{step:basecase}). 
By Lemma~\ref{lemma:sample}, we have that the above quantity can be upper bounded with the number of far collisions between $\tilde R$ and $\tilde S$, getting $\TO{({|\tilde R\backslash \tilde R' \bowtie_{>cr} \tilde S|})/({B\min\{M,|\tilde R\backslash \tilde R'|\})}}$. 

Recall that $\C[i,k]{Q}$ denotes the number of times a pair in $Q$ is in a call to \SimJoin\ at the $i$-th level of recursion where the smallest input set has size in $[2^k, 2^{k+1})$ if $0\leq k <\log M$, and in $[M, +\infty)$ if $k = \log M$. 
Then, the total cost for preparing the recursive calls in Steps~\ref{step:repeat1}--\ref{step:repeat2} in all subproblems at level $i$ and for performing Step~\ref{step:nested-loop} in all subproblems at level $(i+1)$ is:\footnote{We note that the true input size of a subproblem is $|\tilde R|$
and not $|\tilde R\backslash \tilde R'|$. However, the expected value of $\C[i,k]{R \bowtie_{> cr} S}$ is computed assuming the worst case where there are no close pairs an thus $\tilde R'=\emptyset$.}
\begin{equation}\label{eq:aboveL}
\TO{\sum_{k=0}^{\log M}\frac{\C[i,k]{R \bowtie_{> cr} S} L}{B 2^k}}.
\end{equation}
The $L$ factor in the above bound follows since far collisions at level $i$ are used for amortizing the cost of Step~\ref{step:nested-loop} for each one of the $L$ iterations of Step~\ref{step:repeat}.

To get the total I/O complexity of the algorithm we sum the I/O complexity required by each recursive level. 
We bound the cost of each level as follows:
for a level $i<\ell$ we use the bound in Equation~\ref{eq:belowL}; 
for a level $i>\ell$ we use the bound in Equation~\ref{eq:aboveL}; 
for level $i=\ell$, we use the bound given in Equation~\ref{eq:aboveL} to which we add the first term in Equation~\ref{eq:belowL} since  the cost of
Step~\ref{step:nested-loop} at level $\ell$ is not included in Equation~\ref{eq:aboveL} (note that the addition of Equations~\ref{eq:belowL}
and~\ref{eq:aboveL} gives a weak upper bound for level $\ell$).
The lemma follows.\qed
\end{proof}

We will now analyze the expected sizes of the terms in Lemma~\ref{lem:collision-bound}. 
Clearly each pair from $R\times S$ is in the top level call, so the number of collisions is $|R||S| < N^2$. 
But in lower levels we show that the expected number of times that a pair collides either decreases or increases geometrically, depending on whether the collision probability is smaller or larger than $p_1$ (or equivalently, depending on whether the distance is greater or smaller than the radius $r$). 
The lemma follows by expressing the number of collisions of the pairs at the $i$-th recursive level as a \emph{Galton-Watson branching process}~\cite{harris2002theory}. 

\begin{lemma}\label{lemma:expectation}
Given that $\mathcal{A}$ holds, for each $0\leq i \leq \Psi$ we have

\begin{enumerate}
	\item $\E{\C{R \underset{> cr}{\bowtie} S}} \leq |R \underset{>
cr}{\bowtie} S|\, (p_2/p_1)^i$\label{eq:numfar};
	\item $\E{\C{R \underset{> r, \leq cr}{\bowtie} S}} \leq |R \underset{>r, \leq
cr}{\bowtie} S|$ \label{eq:numcnear};
	\item $\E{\C{R \underset{\leq r}{\bowtie} S}} \leq |R \underset{\leq
r}{\bowtie} S|\, L^i$\label{eq:numnear};
	\item $\E{\C[i,k]{R \underset{> cr}{\bowtie} S}} \leq N 2^{k+1}\,
(p_2/p_1)^i$, for any $0\leq k <\log M$\label{eq:numfarK}.
\end{enumerate}
\end{lemma}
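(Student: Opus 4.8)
The plan is to treat the propagation of each pair through the recursion as a Galton--Watson branching process and to bound its expected population at depth $i$ by linearity over single pairs. First I would fix a single pair $(x,y)$ and let $q=\Pr{h(x)=h(y)}$ be its collision probability under one draw from $\mathcal H$. A level-$i$ subproblem containing $(x,y)$ corresponds to a root-to-depth-$i$ path in the recursion tree: at each level the call containing the pair makes $L=1/p_1$ independent repetitions (Step~\ref{step:repeat}), and in each repetition the pair descends into (exactly one) child iff its two points collide under the freshly drawn hash function. Hence there are $L^i$ candidate paths, and although paths sharing a prefix share hash functions, each individual path of length $i$ uses $i$ mutually independent hash functions, so the pair follows it with probability exactly $q^i$. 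By linearity over the $L^i$ paths, $\E{\C{\{(x,y)\}}}=(Lq)^i$. Deleting points into $\tilde R'$ (Step~\ref{step:estimate}) and terminating at the base case only remove occurrences, so $(Lq)^i$ is a valid upper bound on the true count, which is all the lemma needs.

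Next I would read off the three unweighted bounds by plugging in the LSH guarantees for $q$ and summing over the pairs of each type by linearity. For a far pair, $d(x,y)>cr$ gives $q\le p_2$, so $Lq\le p_2/p_1$, and summing over $R\bowtie_{>cr}S$ yields item~\ref{eq:numfar}. For a $c$-near pair, monotonicity (in the relaxed form of the footnote, which guarantees that a pair at distance $>r$ collides with probability at most that of any near pair, i.e.\ at most $p_1$) gives $q\le p_1$, hence $Lq\le 1$ and the population neither grows nor shrinks, giving item~\ref{eq:numcnear}. For a near pair the only available bound is $q\le 1$, so $Lq\le L=1/p_1$, giving the growth factor $L^i$ of item~\ref{eq:numnear}. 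In each case the event $\mathcal A$ (Lemma~\ref{lemma:sample}) is used only to justify ignoring the deletions into $\tilde R'$.

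The delicate part, and the step I expect to be the main obstacle, is item~\ref{eq:numfarK}, where the extra factor $2^{k+1}$ (the size of the smaller side) must be produced while keeping the geometric decay $(p_2/p_1)^i$. My plan is to charge every far-pair occurrence in a size-$k$ subproblem to its endpoint on the \emph{larger} side: in such a subproblem a fixed large-side point has at most $|\tilde R_P|<2^{k+1}$ far partners, since all of them must lie in the smaller side. Summing this cap over the at most $L^i$ subproblems that a point can reach, and over all $N$ points, while simultaneously importing the per-pair decay $(Lq)^i\le(p_2/p_1)^i$ from the branching process, should give the stated $N\,2^{k+1}(p_2/p_1)^i$.

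The difficulty is that these two estimates pull in opposite directions: the size cap alone loses the $p_2^i$ factor, whereas a naive union bound over the far partners of a heavy point reintroduces the full (possibly large) far-degree. So the combination must be carried out carefully, extracting $2^{k+1}$ from the small side while still paying the per-level collision cost $p_2$ for each far partner that actually reaches the subproblem. Because hash values of an LSH family may be correlated by definition, I cannot appeal to independence across distinct points; instead I would rely only on linearity, Markov/union bounds, and the conservation of total mass, namely that each point is replicated into at most $L^i$ subproblems at level $i$, so that $\sum_P(|\tilde R_P|+|\tilde S_P|)\le NL^i$. These four estimates then feed directly into Lemma~\ref{lem:collision-bound} to establish Theorem~\ref{thm:main}.
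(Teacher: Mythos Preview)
Your treatment of items 1--3 is essentially identical to the paper's: both model the propagation of a fixed pair $(x,y)$ as a Galton--Watson process with offspring mean $\Pr{h(x)=h(y)}/p_1$, read off the expected population $(\Pr{h(x)=h(y)}/p_1)^i$ at generation $i$, and plug in the LSH bounds $q\le p_2$, $q\le p_1$ (monotonicity), and $q\le 1$ for far, $c$-near, and near pairs respectively, summing over pairs by linearity.

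For item 4, the argument you outline --- cap each large-side point-occurrence by the small-side size $2^{k+1}$, sum over the $NL^i$ point-occurrences, and ``simultaneously import'' the per-pair survival factor $p_2^i$ --- is exactly what the paper does, in two sentences: it first states that $N\,2^{k+1}L^i$ is the deterministic maximum number of far collisions in size-$k$ subproblems at level $i$ (your replication bound $\sum_P|\tilde S_P|\le NL^i$ times the cap $2^{k+1}$), and then asserts that ``each one of these collisions survives up to level $i$ with probability $p_2^i$'', arriving at $N\,2^{k+1}(p_2/p_1)^i$. The ``careful combination'' you flag as the main obstacle is not carried out in the paper either; the cap $2^{k+1}$ already counts far partners that have \emph{reached} level $i$, so layering the survival probability $p_2^i$ on top of that count is precisely the tension you describe between the size cap and the decay factor. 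In short, your proposal and the paper's proof coincide on items 1--3, and on item 4 you have reproduced the paper's argument while being more explicit about a step the paper takes without further justification.
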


\begin{proof}

Let $x\in R$ and $y\in S$. 
We are interested in upper bounding the number of collisions of the pair at the $i$-th recursive level. 
We envision the problem as \emph{branching process} (more specifically a Galton–Watson process, see e.g.~\cite{harris2002theory}) where the expected number of children (i.e., recursive calls that preserve a particular collision) is $\Pr{h(x)=h(y)}/p_1$ for random $h\in\mathcal{H}$. 
It is a standard fact from this theory that the expected population size at generation $i$ (i.e., number of times $(x,y)$ is in a problem at recursive level $i$) is $(\Pr{h(x)=h(y)}/p_1)^i$~\cite[Theorem 5.1]{harris2002theory}. 
If $d(x,y)>cr$, we have that $\Pr{h(x)=h(y)}\leq p_2$ and each far pair appears at most $(p_2/p_1)^i$ times in expectation at level $i$, from which follows Equation~\ref{eq:numfar}. 
Moreover, since the probability of collisions is monotonic in the distance, we have that  $\Pr{h(x)=h(y)}\leq 1$ if $d(x,y)\leq r$, and $\Pr{h(x)=h(y)}\leq p_1$ if $r<d(x,y)\leq cr$, from which follow Equations~\ref{eq:numcnear} and~\ref{eq:numnear}.

In order to get the last bound we observe that each entry of $R$ and $S$ is replicated $L^i=p_1^{-i}$ times at level $i$. 
Thus, we have that $N 2^{k+1} L^i$ is the total maximum number of far collisions in subproblems at level $i$ where the smallest input set has size in $[2^k, 2^{k+1})$. 
Each one of these collisions survives up to level $i$ with probability $p_2^i$, and thus the expected number of these collisions is $N 2^{k+1} (p_1/p_2)^i$. \qed

\end{proof}

We are now ready to prove the I/O complexity of \SimJoin\ as claimed in Theorem~\ref{thm:main}.
By the linearity of expectation and Lemma~\ref{lem:collision-bound}, we get that the expected I/O complexity of \SimJoin\ is
$$
\TO{\frac{NL^{\ell}}{B} + \sum_{i=0}^\ell \frac{\E{\C{R \underset{\leq
cr}{\bowtie} S}}}{MB}
+ \sum_{i=\ell}^{\Psi-1}\sum_{k=0}^{\log
M}\frac{\E{\C[i,k]{ R \underset{> cr}{\bowtie}  S}} L
}{B 2^k}} \enspace,
$$
where $\ell=\lceil \log_{1/p_2}(N/M)\rceil$. 
Note that $\C[i,\log M]{ R \bowtie_{> cr} S}\leq \C{ R \bowtie_{> cr}  S}$ we have $|R \bowtie_{> cr}  S|\leq N^2$ and
$\C{R \bowtie_{\leq cr} S}=\C{R \bowtie_{\leq r} S}+\C{R \bowtie_{> r, \leq cr} S}$. By plugging in the
bounds on the expected number of collisions given in Lemma~\ref{lemma:expectation}, we get the claimed result.


\subsubsection{Analysis of Correctness.}\label{sec:correctness}
The following lemma shows that \SimJoin\ outputs with probability $1 - \BO{1/N}$
all  near pairs,  as claimed in Theorem~\ref{thm:main}.
\begin{lemma}\label{lem:prob}
Let $R,S \subseteq \mathbb{U}$ and $|R| + |S| = N$. Executing $\BO{\log^{3/2} N}$ independent 
repetitions of \SimJoin(R,S,0) outputs $R \bowtie_{\leq r}  S$ with probability at least $1 - \BO{1/N}$.
\end{lemma}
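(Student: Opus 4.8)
The plan is to first reduce the lemma to a single-run, single-pair statement, and then to analyze one run via the same Galton--Watson branching process used in Lemma~\ref{lemma:expectation}. Throughout I condition on the event $\mathcal{A}$ of Lemma~\ref{lemma:sample}, which fails with probability only $\BO{1/N}$ and is therefore harmless for the final bound. Fix a near pair $(x,y)$ with $d(x,y)\le r$. I claim that a single invocation of \SimJoin$(R,S,0)$ calls \texttt{emit}$(x,y)$ with probability at least $c/\sqrt{\log N}$ for some constant $c>0$. Granting this, $t=\BO{\log^{3/2} N}$ independent repetitions miss the pair with probability at most $(1-c/\sqrt{\log N})^{t}\le e^{-ct/\sqrt{\log N}}\le N^{-3}$ for an appropriate constant hidden in the $\BO{\cdot}$; a union bound over the at most $N^2$ near pairs, together with the $\BO{1/N}$ failure of $\mathcal{A}$, yields the claimed success probability $1-\BO{1/N}$.

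For the single-run bound I track the copies of $(x,y)$ across recursion levels. By the remark preceding Lemma~\ref{lemma:expectation}, $(x,y)$ is in a subproblem precisely when $x$ and $y$ have collided on every step of the call path, and in each iteration of Step~\ref{step:repeat} a surviving copy spawns a number of children distributed as $\mathrm{Bin}(L,p(x,y))$, independently across copies; this is a Galton--Watson process with mean offspring $\mu=p(x,y)/p_1\ge 1$ since the pair is near. Crucially, a copy causes $(x,y)$ to be emitted as soon as its lineage reaches a node that is either a base case (Step~\ref{step:basecase}) or a node at which $x$ (the endpoint lying in the smaller set $\tilde R$) is placed into $\tilde R'$ (Steps~\ref{step:sample}--\ref{step:estimate}): in both situations the nested-loop join of Theorem~\ref{thm:co-nested-loop} invoked in Step~\ref{step:basecase} or Step~\ref{step:nested-loop} reports $(x,y)$, because $y$ is still in $\tilde S$ by definition of the pair being in the subproblem. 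Hence it suffices to lower bound the probability that the branching process avoids extinction until it meets such a node.

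The worst case is $\mu=1$ (a pair at distance exactly $r$), i.e.\ a critical process, whose offspring variance is $\sigma^2=Lp_1(1-p_1)=1-p_1=\BT{1}$. A pure survival argument---a second-moment (Paley--Zygmund) bound on the population $Z_i$ at level $i$, using $\E{Z_i}=\mu^i\ge 1$ and $\V{Z_i}=\BO{i}$, giving $\Pr{Z_i>0}\ge \E{Z_i}^2/\E{Z_i^2}=\BOM{1/i}$---shows that a copy reaches recursion depth $i$ with probability $\BOM{1/i}$. Applied naively to the full depth $\Psi=\BT{\log N}$ this gives only $\BOM{1/\log N}$, which would force $\BO{\log^2 N}$ repetitions. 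The improvement to $\sqrt{\log N}$ comes from the capture mechanism: I will argue that, given $\mathcal{A}$ and conditioned on survival, a surviving copy of $x$ is driven into $\tilde R'$---or the subproblem collapses to the $|\tilde R|\le 1$ base case---after only $\BO{\sqrt{\log N}}$ levels with constant probability. Intuitively, a surviving lineage either sheds its far neighbours geometrically, shrinking its $\tilde S$-side toward a base case, or accumulates enough $c$-near neighbours to trip the $\tilde R'$ test. Thus the process need only survive depth $\BO{\sqrt{\log N}}$, which by the $\BOM{1/i}$ estimate happens with probability $\BOM{1/\sqrt{\log N}}$.

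The main obstacle is precisely this last step: establishing that a surviving near pair is captured by $\tilde R'$ or reaches a base case within $\BO{\sqrt{\log N}}$ levels, rather than merely within $\Psi$ levels. This requires controlling, jointly with the branching, how fast the $\tilde S$-side of the pair's subproblem shrinks when $x\notin\tilde R'$---where, by Equation~\ref{eq:manyfar}, a constant fraction of $\tilde S$ is far from $x$ and hence separated out with probability $1-p_2$ per level---against the competing event that $x$ becomes $c$-near to a constant fraction of $\tilde S$ and is captured. Once the effective depth is pinned to $\BO{\sqrt{\log N}}$, the remaining steps are the routine second-moment survival estimate and the union bound described above.
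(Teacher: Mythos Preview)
Your setup—reducing to a single near pair, modeling the number of level-$i$ subproblems containing it as a Galton--Watson process with offspring mean $\mu=p(x,y)/p_1\ge 1$, and disposing of the supercritical case via constant-probability non-extinction—matches the paper. The divergence is entirely in the critical case $\mu=1$.

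The paper makes no use of the $\tilde R'$ capture mechanism for correctness; that device figures only in the I/O analysis. For $\mu=1$ the paper works with the bare population $X_i$: from $\V{X_i}\le 2i$ and Chebyshev it takes $j^\ast=2\sqrt{i}+\BO{1}$ with $\sum_{j\ge j^\ast}\Pr{X_i\ge j}\le\sum_{j\ge j^\ast}\V{X_i}/j^2\le 1/2$, combines this with the identity $\E{X_i}=\sum_{j\ge 1}\Pr{X_i\ge j}=1$, and uses monotonicity of $j\mapsto\Pr{X_i\ge j}$ to conclude $\Pr{X_i\ge 1}\ge 1/(2j^\ast)=\BOM{1/\sqrt{i}}$. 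Taking $i=\Psi=\BO{\log N}$ yields the $\BOM{1/\sqrt{\log N}}$ single-run success probability directly, with no reference to Steps~\ref{step:sample}--\ref{step:nested-loop}.

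Your Paley--Zygmund bound $\Pr{Z_i>0}\ge\E{Z_i}^2/\E{Z_i^2}=\BOM{1/i}$ is weaker, and you try to recover the missing factor by arguing that a surviving lineage is absorbed—by the $\tilde R'$ test or by the $|\tilde R|\le 1$ base case—within $\BO{\sqrt{\log N}}$ levels. This is precisely the step you flag as ``the main obstacle,'' and it is not established. Worse, the heuristic you sketch does not point to $\sqrt{\log N}$ at all: Equation~\ref{eq:manyfar} guarantees only that a constant fraction of $\tilde S$ is far from $x$ whenever $x\notin\tilde R'$, and shedding a constant fraction per level drives $|\tilde S|$ down on a $\BO{\log N}$ time scale, not $\BO{\sqrt{\log N}}$. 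Nothing in Steps~\ref{step:sample}--\ref{step:estimate} is calibrated to a $\sqrt{\log N}$ depth. So as written your argument has a genuine gap, and the paper's route bypasses it entirely by extracting the $1/\sqrt{i}$ rate from the second-moment data on the branching process alone.
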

\begin{proof}
We now argue that a pair $(x,y)$ with $d(x,y)\leq r$ is output with probability $\Omega ( 1 / \sqrt{\log N})$.
Let $X_i = C_i((x,y))$ be the number of subproblems at the level $i$ containing $(x,y)$.
By applying Galton-Watson branching process, we get that $\E{X_i}=(\Pr{h(x)=h(y)}/p_1)^i$.
If $\Pr{h(x)=h(y)}/p_1 > 1$ then in fact there is positive constant probability that $(x,y)$ survives indefinitely, i.e., does not go extinct~\cite{harris2002theory}.
Since at every branch of the recursion we eventually compare points that collide under all hash functions on the path from the root call, this implies that $(x,y)$ is reported with a positive constant probability.

In the \emph{critical case} where $\Pr{h(x)=h(y)}/p_1 = 1$ we need to consider the variance of $X_i$, which by~\cite[Theorem 5.1]{harris2002theory} is equal to $i\sigma^2$, where $\sigma^2$ is the variance of the number of children (hash collisions in recursive calls).
If $1/p_1$ is integer, the number of children in our branching process follows a binomial distribution with mean~1.
This implies that $\sigma^2 < 1$.
Also in the case where $1/p_1$ is not integer, it is easy to see that the variance is bounded by 2. 
That is, we have $\V{X_i} \leq 2i$, which by Chebychev's inequality means that for some integer $j^* = 2\sqrt{i} + \BO{1}$:
$$\sum_{j=j^*}^\infty \Pr{X_i \geq j} \leq \sum_{j=j^*}^\infty \V{X_i}/j^2 \leq 1/2 \enspace . $$
Since we have $\E{X_i} = \sum_{j=1}^\infty \Pr{X_i \geq j} = 1$ then $\sum_{j=1}^{j^*-1} \Pr{X_i \geq j} > 1/2$, and since $\Pr{X_i \geq j}$ is non-increasing with $j$ this implies that $\Pr{X_i \geq 1} \geq 1/(2j^*) = \BOM{1/\sqrt{i}}$.
Furthermore, the recursion depth $\BO{\log N}$ implies the probability that a near pair is found is $\BOM{1/\sqrt{\log N}}$.
Thus, by repeating $\BO{\log^{3/2} N}$ times we can make the error probability $\BO{1/N^3}$ for a particular pair and $\BO{1/N}$ for the entire output by applying the union bound.
\end{proof}

\subsection{Removing duplicates}\label{sec:remove-duplicates}

Given two near points $x$ and $y$, the definition of LSH requires their collision probability $p(x,y)=\Pr{h(x)=h(y)} \geq p_1$.
If $p(x,y) \gg p_1$, our \SimJoin\ algorithm can emit $(x,y)$ many times. 
As an example suppose that the algorithm ends in one recursive call: then, the pair $(x,y)$ is expected to be in the same bucket for $p(x,y) L$ iterations of Step~\ref{step:repeat} and thus it is emitted $p(x,y) L \gg 1$ times in expectation. 
Moreover, if the pair is not emitted in the first recursive level, the expected number of emitted pairs increases as $(p(x,y) L)^i$ since the pair $(x,y)$ is contained in $(p(x,y)L)^i$ subproblems at the $i$-th recursive level.
A simple solution requires to store all emitted near pairs on the external memory, and then using a cache-oblivious sorting algorithm~\cite{frigo1999cache} for removing repetitions. 
However, this approach requires $\TO{\kappa \frac{|R \bowtie_{\leq r} S|}{B}}$ I/Os, where $\kappa$ is the expected average replication of each emitted pair, which can dominate the complexity of \SimJoin.
A similar issue appears in the cache-aware algorithm \ASimJoin\ as well: a near pair is emitted at most $L'=(N/M)^\rho$ times since there is no recursion and the  partitioning of the two input sets is repeated only $L'$ times.

If the collision probability $\Pr{h(x)=h(y)}$ can be explicitly computed in $\BO{1}$ time and no I/Os for each pair $(x,y)$, it is possible to emit each near pair once in expectation without storing near pairs on the external memory. 
We note that the collision probability can be computed for many metrics, including Hamming~\cite{Indyk_STOC98}, $\ell_1$ and $\ell_2$~\cite{Datar_SOCG04},
Jaccard~\cite{Broder_NETWORK97}, and angular~\cite{Charikar_STOC02} distances. 
For the cache-oblivious algorithm, the approach is the following: for each near pair $(x,y)$ that is found at the $i$-th recursive level, with
$i\geq 0$, the pair is emitted with probability $1/(p(x,y)L)^i$; otherwise, we ignore it.
For the cache-aware algorithm, the idea is the same but a near pair is emitted with probability $1/(p(x,y)L')$ with $L' = (N/M)^{\rho}$. 

\begin{theorem}
The above approaches guarantee that each near pair is emitted with constant probability in both \ASimJoin\ and \SimJoin.
\end{theorem}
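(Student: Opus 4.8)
The plan is to bound the expected number of times a fixed near pair $(x,y)$ is emitted and show it is $\Theta(1)$; I will treat \SimJoin\ in detail and then note that \ASimJoin\ is the simpler, non-recursive special case. Write $p=p(x,y)=\Pr{h(x)=h(y)}$, and recall that $p\ge p_1$ for a near pair, so $q:=pL=p/p_1\ge 1$ and the dedup probability $1/(pL)^i=q^{-i}$ is a genuine probability at every level $i$. Following Lemma~\ref{lemma:expectation}, I view the set of subproblems in which $(x,y)$ appears as the tree of a Galton--Watson process whose offspring mean at every surviving node equals $q$: each of the $L$ repetitions of Step~\ref{step:repeat} keeps the pair in a common bucket independently with probability $p$, so a continuing node spawns $\mathrm{Bin}(L,p)$ children with mean $Lp=q$.

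The key structural observation is that emission occurs exactly at the leaves of this tree. Classify every node (subproblem) containing $(x,y)$ as \emph{found} --- if the pair is compared by a nested loop there, either because the point of the pair lying in the smaller set $R$ entered $R'$ and is matched against all of $S$ in Step~\ref{step:nested-loop}, or because the node is a base case of Step~\ref{step:basecase} --- or as \emph{continuing} otherwise. A found node is necessarily a leaf: in the first case the point is deleted from $R$, and in the second the call returns, so in neither case does $(x,y)$ descend further; a continuing node is precisely one that may spawn children. Since the dedup rule emits only at found nodes, at level $i$ with probability $q^{-i}$, we have $\E{Z}=\sum_{i=0}^{\Psi}A_i\,q^{-i}$, where $Z$ is the number of emissions of $(x,y)$ and $A_i$ the expected number of found nodes at level $i$.

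The heart of the argument is a telescoping identity. Let $N_i$ be the expected number of nodes at level $i$ and $B_i=N_i-A_i$ the expected number of continuing ones. Because every level-$(i{+}1)$ node is a child of a continuing level-$i$ node and the offspring mean is $q$, taking expectations gives $N_{i+1}=q\,B_i$, hence $A_i=N_i-N_{i+1}/q$. Substituting,
\[
\E{Z}=\sum_{i=0}^{\Psi}\frac{N_i-N_{i+1}/q}{q^{i}}=N_0-\frac{N_{\Psi+1}}{q^{\Psi+1}},
\]
which collapses to exactly $1$ using the boundary conditions $N_0=1$ (the pair lives in a single root subproblem) and $N_{\Psi+1}=0$ (the recursion is capped at depth $\Psi$, where every node is a base case and spawns no children). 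This identity is robust to the $R'$-removal and to branches that die out, because both effects are already absorbed into the values $N_i$ and into the offspring mean $q$; in particular I do not even need the explicit value $N_i=q^i$.

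For \ASimJoin\ the same accounting degenerates to a single level: there is no recursion, so in each of the $L'=(N/M)^\rho$ hashing rounds the pair collides with probability $p'(x,y)\ge p'_1$ and is then emitted with probability $1/(p'(x,y)L')$, giving emission probability $1/L'$ per round and thus $\Theta(1)$ emissions in expectation over the $L'$ rounds (removal by the counter of Step~\ref{step:remove} can only decrease this). The main obstacle I anticipate is making the found/continuing dichotomy and the offspring-mean identity $N_{i+1}=q\,B_i$ fully rigorous in the presence of the $R'$-filtering: one must argue that conditioning on the continuing nodes at level $i$ leaves the per-node offspring mean equal to $q$, which holds because the hash used for branching in Step~\ref{step:repeat} is independent of the sampling that defines $R'$. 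The remaining care goes into verifying the boundary terms $N_0=1$ and $N_{\Psi+1}=0$ that make the telescoping sum evaluate to the claimed constant.
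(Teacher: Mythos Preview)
Your argument is correct and mirrors the paper's proof: your quantities $N_i$, $A_i$, $B_i$ are exactly the paper's $\E{G_i}$, $\E{G_i-H_i}$, $\E{H_i}$, and your branching identity $N_{i+1}=qB_i$ is the paper's $\E{G_l}=p(x,y)L\,\E{H_{l-1}}$. The only cosmetic difference is that you evaluate $\E{Z}$ via a direct telescoping sum, whereas the paper proves the partial sums satisfy $\E{\sum_{i\le l}K_i}=1-\E{H_l}/q^l$ by induction; both routes rest on the same boundary data ($N_0=1$, no nodes beyond level $\Psi$) and yield expected emissions exactly~$1$.
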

\begin{proof}
The claim easily follows for the cache-aware algorithm: indeed the two points of a near pair $(x,y)$ have the same hash value in $p(x,y)L$ (in expectation) of the $L'=(N/M)^\rho$ repetitions of Step~\ref{step:Arepeat}. 
Therefore, by emitting the  pair with probability $1/(p(x,y)L)$ we get the claim.

We now focus on the cache-oblivious algorithm, where the claim requires a more articulated proof. 
Given a near pair $(x,y)$, let $G_i$ and $H_i$ be random variables denoting respectively the number of subproblems at level $i$ containing the pair $(x,y)$, and the number of subproblems at level $i$ where $(x,y)$ is not found by the cache-oblivious nested loop join algorithm in Theorem~\ref{thm:co-nested-loop}.
Let also $K_i$ be a random variable denoting the actual number of times the pair $(x,y)$ is emitted at level $i$. 
We have followings properties: 
\begin{enumerate}
	\item $\E{K_i| G_i,H_i}=(G_i-H_i)/(p(x,y)L)^i$ since a near pair is emitted with probability $1/(p(x,y)L)^i$ only in those subproblems where the pair is found by the join algorithm.
	\item $\E{G_i}=(p(x,y)L)^i$ since a near pair is in the same bucket with probability $p(x,y)^i$ (it follows from the previous analysis based on standard branching).
	\item $G_0=1$ since each pair exists at the beginning of the algorithm.
	\item $H_\Psi=0$ since each pair surviving up to the last recursive level is found by the nested loop join algorithm.
\end{enumerate}

We are interested in upper bounding $\E{\sum_{i=0}^{\Psi} K_i}$ by induction that
$$\E{\sum_{i=0}^{l} K_i}=1-\frac{\E{H_l}}{(p(x,y)L)^l} \enspace ,$$
for any $0\leq l\leq \Psi$.
For $l=0$ (i.e., the first call to \SimJoin) and note that $\E{G_0}=G_0=1$, the equality is verified since
$$
\E{K_0}=\E{\E{K_0 | G_0, H_0}} = \E{G_0-H_0} = 1-\E{H_0} \enspace.
$$

We now consider a generic level $l>0$. 
Since a pair  propagates in a lower recursive level with probability $p(x,y)$, we have \[\E{G_l} = \E{\E{G_l | H_{l-1}}} = p(x,y)L\E{H_{l-1}}\text{.}\] 
Thus,
\begin{align*}
\E{K_l} = \E{\E{K_l|G_{l},H_{l}}} &=
\E{\frac{G_{l}-H_{l}}{(p(x,y)L)^l}} \\ &=
\frac{\E{H_{\ell-1}}}{(p(x,y)L)^{l-1}} -
\frac{\E{H_{\ell}}}{{(p(x,y)L)^l}} \enspace \enspace .
\end{align*}
By exploiting the inductive hypothesis, we get
$$
\E{\sum_{i=0}^{l} K_i}=
\E{K_l}+\E{\sum_{i=0}^{l-1} K_i}=
1-\frac{\E{H_l}}{(p(x,y)L)^l}.$$
Since $H_\Psi=0$, we have $\E{\sum_{i=0}^{\Psi} K_i}=1$ and the claim follows. \qed
\end{proof}

We observe that the proposed approach is equivalent to use an LSH where $p(x,y)=p_1$ for each near pair.
Finally, we  remark that this approach does not avoid replica of the same near pair when the algorithm is repeated for increasing the collision probability of near pairs. 
Thus, the probability of emitting a pair is at least $\BOM{1/\sqrt{\Psi}}$ as shown in the second part of Section~\ref{sec:correctness} and $\BO{\log^{3/2} N}$ repetitions of \SimJoin\ suffices to find all pairs with high probability (however, the expected number of replica of a given near pair becomes $\BO{\log^{3/2} N}$, even with the proposed approach).

\section{Discussion}\label{sec:exp}

We will argue informally that our I/O complexity of Theorem~\ref{thm:main} is
close to the optimal. For simple arguments, we split the I/O complexity of our
algorithms in two parts:
\begin{align*}
T_1 &= \left(\frac{N}{M}\right)^\rho \left( \frac{N}{B} + \frac{|R \underset{\leq r}{\bowtie} S|}{M B} \right), \\
T_2 &= \frac{|R \underset{\leq cr}{\bowtie} S|}{M B}.
\end{align*}

We now argue that $T_1$ I/Os are necessary. 
First, notice that we need $\BO{N/B}$ I/Os per hash function for transferring data between memories, computing and writing hash values to disk to find collisions. 
Second, since each I/O brings at most $B$ points in order to compute the distance with $M$ points residing in the internal memory, we need $N/B$ I/Os to examine $MN$ pairs. 
This means that when the collision probability of far pairs $p_2 \leq M/N$ and the number of collisions of far pairs is at most $MN$ in expectation, we only need $\BO{N/B}$ I/Os to detect such far pairs. 
Now we consider the case where there are $\BOM{N^2}$ pairs at distance $cr$. 
Due to the monotonicity of LSH family, the collision probability for each such pair must be $\BO{M/N}$ to ensure that $\BO{N/B}$ I/Os suffices to examine such pairs.
In turn, this means that the collision probability for near pairs within distance $r$ must be at most $\BO{(M/N)^\rho}$.
So we need $\BOM{(N/M)^\rho}$ repetitions (different hash functions) to expect at least one collision for any near pair.

Then, a worst-case data set can be given so that we might need to examine, for each of the $\BOM{(N/M)^\rho}$ hash functions, a constant fraction the pairs in $R \bowtie_{\leq r} S$ whose collision probability is constant.
For example, this can happen if $R$ and $S$ include two clusters of very near points.
One could speculate that some pairs could be marked as ``finished'' during computation such that we do not have to compute their distances again.
However, it seems hard to make this idea work for an arbitrary distance measure where there may be very little structure for the output set, hence the $\BO{ |R \bowtie_{\leq r} S|/(M B)}$ additional I/Os per repetition is needed. 



In order to argue that the term $T_2$ is needed, we consider the case where all pairs in $R \bowtie_{\leq cr} S$ have distance $r+\varepsilon$ for a value $\varepsilon$ small enough to make the collision probability of pair at distance $r+\varepsilon$ indistinguishable from the collision probability of pair at distance~$r$.
Then every pair in $R \bowtie_{\leq cr} S$ must be brought into the internal memory to ensure the correct result, which requires $T_2$ I/Os. 
This holds for \emph{any} algorithm enumerating or listing the near pairs. 
Therefore, there does not exist an algorithm that beats the quadratic dependency on $N$ for such worst-case input sets, unless the distribution of the input is known beforehand. 
However, when $|R \bowtie_{\leq cr} S|$ is subquadratic regarding $N$, a potential approach to achieve subquadratic dependency in expectation for similarity join problem is filtering invalid pairs based on their distances --- currently LSH-based method is the only way to do this.

Note that when $M=N$ our I/O cost is $\BO{N/B}$ as we would expect, since just reading the input is optimal. 
At the other extreme, when $B=M=1$ our bound matches the time complexity of internal memory techniques. 
When $|R \bowtie_{\leq cr} S|$ are bounded by $MN$ then our algorithm achieves subquadratic dependency on $N/M$. 
Such an assumption is realistic in some real-world datasets as shown in the experimental evaluation section.


\smallskip

\ms{To complement the above discussion we will evaluate our complexity by computing explicit constants and then evaluating the total number of I/Os spent by analyzing real datasets. Performing these ``simulated experiments'' has the advantage over real experiments that we are not impacted by any properties of a physical machine.}
We again split the I/O complexity of our algorithms in two parts:
\begin{align*}
T_1 &= \left(\frac{N}{M}\right)^\rho \left( \frac{N}{B}+\frac{|R \underset{\leq r}{\bowtie} S|}{M B}\right)\\
T_2 &= \frac{|R \underset{\leq cr}{\bowtie} S|}{M B}
\end{align*}
and carry out experiments to demonstrate that the first term $T_1$ often dominates the second term $T_2$ in real datasets. 
In particular, we depict the cumulative distribution function (cdf) in log-log scale of all pairwise distances (i.e., $\ell_1$, $\ell_2$) and all pairwise similarities (i.e., Jaccard and cosine) on two commonly used datasets: Enron Email\footnote{https://archive.ics.uci.edu/ml/datasets/Bag+of+Words} and MNIST\footnote{http://yann.lecun.com/exdb/mnist/}, as shown in Figure~\ref{fig:SSJ}.
Since the Enron data set does not have a fixed data size per point, we consider a version of the data set where the dimension has been reduced such that each vector has a fixed size.

\begin{figure*}[t]
\centering
\includegraphics[width=1.0\textwidth]{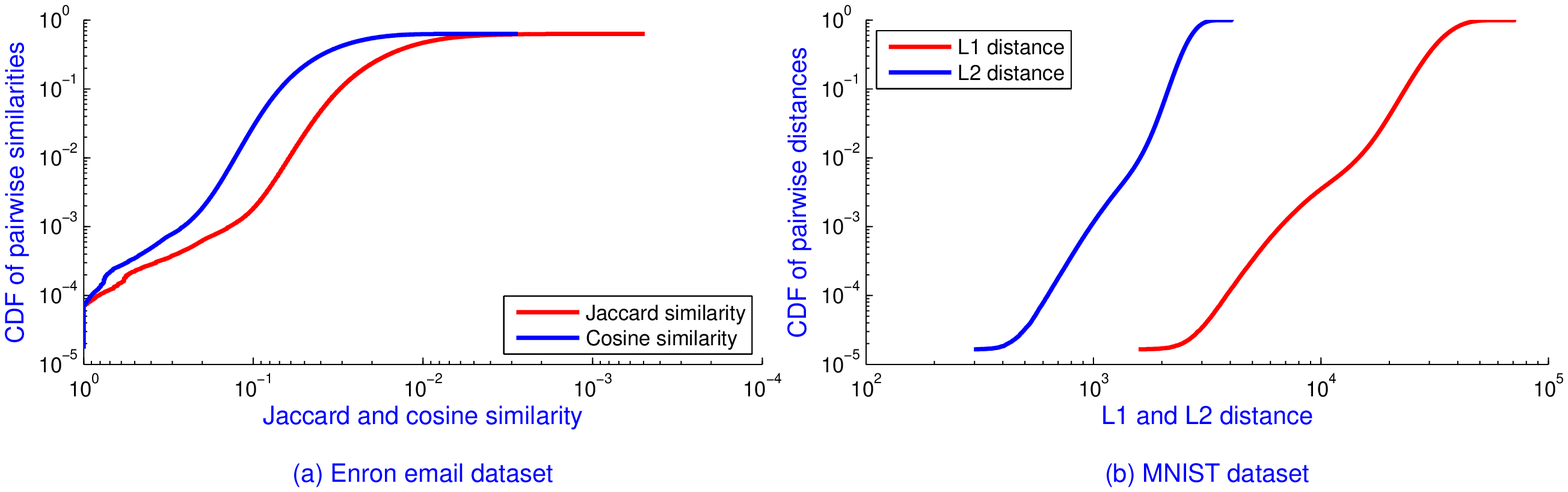}
\caption{The cumulative distributions of pairwise similarities and pairwise distances on samples of 10,000 points from Enron Email and MNIST datasets. We note that values decrease on the x-axis of Figure 1.a, while they increase in Figure 1.b.}
\label{fig:SSJ}
\end{figure*}

Figure~\ref{fig:SSJ}.a shows an inverse \textit{polynomial} relationship with a small exponent $m$ between similarity threshold $s$ and the number of pairwise similarities greater than $s$.  
The degree of the polynomial is particularly low when $s > 0.5$. 
This setting $s > 0.5$ is commonly used in many applications for both Jaccard and cosine similarities~\cite{Arasu_VLDB06,Bayardo_WWW07,Xiao_WWW08}. Similarly, Figure~\ref{fig:SSJ}.b also shows a \textit{monomial} relationship between the distance threshold $r$ and the number of pairwise distances smaller than $r$. 
In turn, this means that the number of $c$-near pairs $|R \bowtie_{\leq cr} S|$ is not much greater than $c^m|R \bowtie_{\leq r} S|$. 
In other words, the second term $T_2$ is often much smaller than the first term $T_1$.

Finally, for the same data sets and metrics, we simulated the cache-aware algorithm with explicit constants and examined the I/Os cost to compare with a standard nested loop method (Section~\ref{sec:generic}) and a lower bound on the standard LSH method (Section~\ref{sec:sorting}).
We set the cache size $M = N/1000$, which is reasonable for judging a number of cache misses since the size ratio between CPU caches and RAM is in that order of magnitude.
In general such setting allows us to investigate what happens when the data size is much larger than fast memory.
For simplicity we use $B=1$ since all methods contain a multiplicative factor $1/B$ on the I/O complexity. 
The values of $\rho$ were computed using good LSH families for the specific metric and parameters $r$ and $c$.
These parameters are picked according to Figure~\ref{fig:SSJ} such that the number of $c$-near pairs are only an order of magnitude larger than the number of near pairs.

The I/O complexity used for nested loop join is $2N + N^2/MB$ (here we assume both sets have size $N$) and the complexity for the standard LSH approach is \emph{lower bounded} by $\sort{N^{1 + \rho}}$.
This complexity is a lower bound on the standard sorting based approaches as it lacks the additional cost that depends on how LSH distributes the points.
Since $M = N / 1000$ we can bound the $\log$-factor of the sorting complexity and use $\sort{N} \leq 8N$ since $2N$ points read and written twice. 
The I/O complexity of our approach is stated in Theorem~\ref{thm:aware}. 
The computed I/O-values in Figure~2 show that the complexity of our algorithm is lower than that of all instances examined.
Nested loop suffers from quadratic dependency on $N$, while the standard LSH bounds lack the dependency on $M$. 
Overall the I/O cost indicates that our cache-aware algorithm is practical on the examined data sets.

\begin{figure*}[t]
\begin{center}
\resizebox{\columnwidth}{!}{%
\begin{tabular}{|c|c|c|c|c|c|c|c|c|c|}
\hline
Data set & Metric & $r$ & $cr$ & $\rho$ & $\frac{|R \underset{\leq r}{\bowtie} S|}{N}$ & $\frac{|R \underset{\leq cr}{\bowtie} S|}{N}$ & Standard LSH & Nested loop & ASimJoin\\
\hline
\hline
Enron & Jaccard & $0.5$ & $0.1$ & $0.30$ & $1.8\cdot 10^3$ & $16\cdot 10^3$ & $> 7.5 \cdot 10^9$ I/Os & $8 \cdot 10^9$ I/Os & $3.2 \cdot 10^9$ I/Os\\
\hline
Enron & Cosine & $0.7$ & $0.2$ & $0.51$ & $1.6\cdot 10^3$ & $16\cdot 10^3$ & $> 212 \cdot 10^9$ I/Os & $8 \cdot 10^9$ I/Os & $6.6 \cdot 10^9$ I/Os\\
\hline
MNIST & L1 & $3000$ & $6000$ & $0.50$ & $1.8$ & $42$ & $> 29 \cdot 10^6$ I/Os & $60 \cdot 10^6$ I/Os & $12 \cdot 10^6$ I/Os\\
\hline
\end{tabular}}
\label{fig:ASSJ}
\caption{A comparison of I/O cost for similarity joins on the standard LSH, nested loop and \ASimJoin\ algorithms.}
\end{center}
\end{figure*}


\section{Conclusion}\label{sec:concl}
In this paper we examine the problem of computing the similarity join of two relations in an external memory setting. 
Our new cache-aware algorithm of Section~\ref{sec:simple} and cache-oblivious algorithm of Section~\ref{sec:cacheobl} improve upon current state of the art by around a factor of $(M/B)^\rho$ I/Os unless the number of $c$-near pairs is huge (more than $NM$).
We believe this is the first cache-oblivious algorithm for similarity join, and more importantly the first subquadratic algorithm whose I/O performance improves significantly when the size of internal memory grows.

It would be interesting to investigate if our cache-oblivious approach is also practical --- this might require adjusting parameters such as $L$.
Our I/O bound is probably not easy to improve significantly, but interesting open problems are to remove the error probability of the algorithm and to improve the implicit dependence on dimension in $B$ and $M$. Note that our work assumes for simplicity that the unit of $M$ and $B$ is number of points, but in general we may get tighter bounds by taking into account the gap between the space required to store a point and the space for hash values.
Also, the result in this paper is made with general spaces in mind and it is an interesting direction to examine if the dependence on dimension could be made explicit and improved in specific spaces.

\bibliographystyle{plain}
\bibliography{simjoin}

\end{document}